\documentclass[journal,comsoc]{IEEEtran}
\usepackage{graphicx}
\usepackage{amssymb}
\usepackage{amsmath}
\usepackage{cite}
\usepackage{mathrsfs}
\usepackage{url}
\usepackage[displaymath,mathlines]{lineno}
\usepackage{color}
\usepackage{tabulary}
\usepackage{multirow}
\usepackage{pbox}
\usepackage{multicol}
\usepackage{lipsum}
\usepackage{float}

\usepackage{dcolumn}
\usepackage{booktabs}
\newcolumntype{d}[1]{D..{#1}}
\usepackage{amsthm}
\usepackage{relsize}
\usepackage{lipsum}
\usepackage{epstopdf}
\usepackage{mathtools}% http://ctan.org/pkg/mathtools
\usepackage{calc}% http://ctan.org/pkg/calc
\usepackage{makecell}
\usepackage{caption}
\usepackage{subcaption}
\usepackage{algorithm}
\usepackage{algpseudocode}
\usepackage{tabularx}

\usepackage{booktabs}
\usepackage{siunitx}

\makeatletter
\newcommand{\multiline}[1]{%
  \begin{tabularx}{\dimexpr\linewidth-\ALG@thistlm}[t]{@{}X@{}}
    #1
  \end{tabularx}
}
\makeatother

\algrenewcommand\algorithmicforall{\textbf{foreach}}
\algrenewcommand\algorithmicindent{.8em}
\usepackage{soul}

\makeatletter
\newcommand{\doublewidetilde}[1]{{%
		\mathpalette\double@widetilde{#1}}}
\newcommand{\double@widetilde}[2]{%
		\sbox\z@{$\m@th#1\widetilde{#2}$}%
		\ht\z@=.5\ht\z@
		\widetilde{\box\z@}}
\makeatother

\newtheorem{lemma}{Lemma}

\def\di{\displaystyle}
\newcommand{\tron}[1]{\left(\di #1 \right)} % round bracket
\newcommand{\abs}[1]{\left|#1\right|}
\newcommand{\vuong}[1]{\left[ #1 \right]} % square bracket
\newcommand{\nhon}[1]{\left\{ #1 \right\}} % angle bracket

\begin{document}

\title{\huge Advanced Quantum Annealing for the Bi-Objective Traveling Thief Problem: An $\varepsilon$-Constraint-based Approach}

\author{ Nguyen Hoang Viet, Nguyen Xuan Tung, Trinh Van Chien, \textit{Member, IEEE}, \\ and  Won-Joo Hwang, \textit{Senior Member, IEEE}

\thanks{Nguyen Hoang Viet and Trinh Van Chien are with the School of Information and Communications Technology, Hanoi University of Science and Technology, Hanoi 100000, Vietnam (e-mail: viet.nh220050@sis.hust.edu.vn and chientv@soict.hust.edu.vn). Nguyen Xuan Tung is with the Faculty of Interdisciplinary Digital Technology,
PHENIKAA University, Yen Nghia, Ha Dong, Hanoi 10000, Viet Nam (email: tung.nguyenxuan@phenikaa-uni.edu.vn). Won-Joo Hwang is with the School of Computer Science and Engineering, Center for Artificial Intelligence Research, Pusan National University, Busan 46241, South Korea (e-mail: wjhwang@pusan.ac.kr).  This work was supported in part by BK21 FOUR, Korean Southeast Center for the 4th Industrial Revolution Leader Education; in part 
by the Institute of Information \& communications Technology Planning \& Evaluation (IITP) under the Artificial Intelligence Convergence Innovation Human Resources Development (IITP-2025-RS-2023-00254177) grant funded by the Korea government(MSIT); in part by Quantum Computing based on Quantum Advantage challenge research(RS-2024-00408613) 
through the National Research Foundation of Korea(NRF) funded by the Korean government (Ministry of Science and 
ICT(MSIT)); in part by by Creation of the quantum information science R\&D ecosystem(based on human resources) (Agreement Number) through the National Research Foundation of Korea(NRF) funded by the Korean government (Ministry of Science and ICT(MIST)) (RS-2023-00256050); and in part by the National Research Foundation of Korea(NRF) grant funded by the Korea government(MSIT)(RS-2024-00336962). \textit{Corressponding author:  Won-Joo Hwang}.

}

}

% The paper headers
% \markboth{Journal of \LaTeX\ Class Files,~Vol.~14, No.~8, August~2021}%
% {Shell \MakeLowercase{\textit{et al.}}: A Sample Article Using IEEEtran.cls for IEEE Journals}

% \IEEEpubid{0000--0000/00\$00.00~\copyright~2021 IEEE}
% Remember, if you use this you must call \IEEEpubidadjcol in the second
% column for its text to clear the IEEEpubid mark.

\maketitle

\begin{abstract}
This paper addresses the Bi-Objective Traveling Thief Problem (BI-TTP), a challenging multi-objective optimization problem that requires the simultaneous optimization of travel cost and item profit. Conventional methods for the BI-TTP often face severe scalability issues due to the complex interdependence between routing and packing decisions, as well as the inherent complexity and large problem size. These difficulties render classical computing approaches increasingly inapplicable. To tackle this, we propose an advanced hybrid approach that combines quantum annealing (QA) with the $\varepsilon$-constraint method. Specifically, we reformulate the bi-objective problem into a single-objective formulation by restricting the second objective through adjustable $\varepsilon$-levels, determined within established upper and lower bounds. The resulting subproblem involves a sum of fractional terms, which is reformulated with auxiliary variables into an equivalent form. Subsequently, the equivalent formulation is transformed into a Quadratic Unconstrained Binary Optimization (QUBO) model, enabling direct solution via a quantum annealing (QA) solver. The solutions obtained from the quantum annealer are subsequently refined using a tailored heuristic procedure to further enhance overall performance. By leveraging the flexibility in selecting $\varepsilon$ parameters, our approach effectively captures a broad Pareto front, enhancing solution diversity. Experimental results on benchmark instances demonstrate that the proposed method effectively balances two objectives and outperforms baseline approaches in time efficiency.
\end{abstract}

\begin{IEEEkeywords}
 Quantum Annealing, Multi-objective Optimization, Travelling Thief Problem, QUBO model
\end{IEEEkeywords}

\vspace{-0.2cm}
\section{Introduction}
% In the field of optimization, most of real-world problems is demonstrated as NP-hard and therefore nontrivial to obtain an optimal solution. Although conventional methods such as branch and bound \cite{xxx} or dynamic programming \cite{xxx} have demonstrated it applicability and power but yet limited to the problem size. Especially, these approaches are operated on classical computers, requiring tremendous time if the instances become bigger due to exponential complexity. Therefore, exact methods become impractical because the sizes of problems in the real world are always enormous. To address it, heuristic approaches are usually proposed \cite{chen2019adaptive,mandziuk2018new,van2024active} for deriving good solutions to speed up the conventional approach in polynomial time. These algorithms usually begin by initializing several valid solutions and then perform several strategies to enhance the quality of the resemblance of solutions throughout iterations. They will terminate after acceptable solutions have been identified, the number of iterations exceeds the maximal loop allowed, or the distinction between solutions of two consecutive iterations is negligible \cite{el2018efficiently}. 

A substantial class of real-world combinatorial optimization problems is NP-hard, making it computationally challenging to find an optimal solution through conventional approaches such as branch-and-bound \cite{malcolm2024multi} and dynamic programming \cite{chauhan2012survey}. To circumvent this limitation, heuristic approaches are commonly employed \cite{kusyk2021survey, stollenwerk2020toward, van2024active}, which, however, require a trade-off between computational efficiency and solution quality. Additionally, real-world problems often exhibit two intertwined characteristics: harsh complexity and interdependent objectives \cite{tung2024jointly}. These features give rise to multi-objective optimization problems (MOPs), where objectives are often in conflict. In such settings, a single solution that optimizes all objectives simultaneously rarely exists. Consequently, solutions are evaluated based on Pareto dominance \cite{wang2018cooperative}, where a solution is considered to dominate another if it is no worse in all objectives and strictly better in at least one. The goal in solving MOPs is to approximate the Pareto front: a set of non-dominated solutions that represent the best trade-offs among objectives.

Many scalarization techniques, such as the weighted-sum method \cite{marler2010weighted} and the $\varepsilon$-constraint method \cite{aghaei2011multi}, have been developed to address multi-objective optimization problems (MOPs). These methods typically convert a MOP into a series of single-objective problems, which are then solved individually to yield high-quality solutions. While the weighted-sum method is limited in its ability to capture non-convex regions of the Pareto front and does not offer control over the solution's position along the front \cite{das1997closer}, the $\varepsilon$-constraint method is highly dependent on the careful selection of the $\varepsilon$-levels \cite{gunantara2018review}. Naive selection can result in infeasible subproblems or poor coverage of the Pareto front. Besides scalarization methods, an alternative and promising approach for MOPs is the use of Multi-Objective Evolutionary Algorithms (MOEAs), due to their ability to simultaneously handle nonlinear, nondifferentiable, and combinatorial objectives within an efficient time \cite{qiao2023self}. Despite their flexibility and robustness, MOEAs often demand a substantial number of function evaluations to converge toward a satisfactory approximation, leading to high computational overhead.

Quantum computing has emerged as a promising paradigm for solving combinatorial optimization problems. Existing approaches can be grouped into three categories based on hardware implementation: (i) purely quantum algorithms that run exclusively on quantum hardware\cite{volpe2025improving,ye2023quantum}; (ii) quantum-inspired heuristics executed on classical machines \cite{gharehchopogh2023quantum, li2020quantum}; and (iii) hybrid methods combining quantum and classical resources \cite{ciacco2025review,yarkoni2022quantum}. Due to the hardware limitations of the noisy intermediate-scale quantum (\texttt{NISQ}) era, hybrid approaches are generally preferred, as they leverage both quantum processing units (QPUs) and state-of-the-art classical solvers for improved scalability in large-scale applications. Among these approaches, hybrid QA stands out due to its practical implementation on the D-Wave quantum annealer \cite{dwaveHybrid} and its seamless integration with classical optimization techniques via the D-Wave Hybrid Solver Service. In this work, we conduct experiments with hybrid QA using the D-Wave Hybrid Solver. 
This solver is capable of solving constrained quadratic optimization problems with up to 500,000 binary variables and 100,000 linear or quadratic constraints. With the support of D-Wave hardware, QA has been demonstrated to achieve superior performance over classical baselines in optimizing single-objective combinatorial problems \cite{marchesin2023improving,guillaume2022deep, wurtz2024solving}. Motivated by these results, we seek to extend this powerful computational paradigm for more challenging scenarios, when the problem are formulated as multi-objective optimization problems. In this work, the BI-TTP is adopted as the case study to evaluate the effectiveness of our proposed approach.

The TTP  \cite{bonyadi2013travelling} is a well-known benchmark that exemplifies the challenges of multi-objective combinatorial optimization. It integrates two classical NP-hard problems, the Traveling Salesman Problem (TSP) and the Knapsack Problem (KP), into a single unified framework. In the TTP, a thief must visit each city exactly once while selecting items of varying value and weight to maximize total profit. However, carrying more items slows down travel speed, creating a conflict between minimizing tour time and maximizing knapsack profit. A faster route limits item collection, while higher profits often require longer travel times. However, several challenges arise when applying QA directly to multi-objective problems such as the BI-TTP. The first limitation is that QA produces only a single solution per execution, necessitating multiple runs to approximate the Pareto front. Consequently, the solution quality may deterioriate if QA repeatedly produces identical solutions. The second challenge lies in the dynamic covariance among objectives, which can significantly degrade the performance of QA if the interference of each objective is not carefully regulated. Under such circumstances, the solution produced by QA may be biased toward a limited region of the objective space, thereby failing to capture the full range of values represented in the true non-dominated solution set. 

Motivated by the aforementioned challenges, this work introduces a novel quantum annealing-based approach for solving the bi-objective Traveling Thief Problem (BI-TTP), specifically under the setting where item value depreciation is omitted. This assumption aligns with the formulations adopted in the BI-TTP competitions held at the \textit{EMO-2019}\footnote{https://www.egr.msu.edu/coinlab/blankjul/emo19-thief/} and \textit{GECCO-2019}\footnote{https://www.egr.msu.edu/coinlab/blankjul/gecco19-thief/} conferences. Our method employs the $\varepsilon$-constraint technique to transform the bi-objective formulation into a single-objective problem, wherein one objective is optimized while the other is bounded by a tunable threshold within predefined upper and lower limits. To exploit the computational advantages of quantum computing, the resulting problem is further reformulated as a QUBO model, enabling direct solution via D-Wave's quantum annealer. A custom-designed heuristic is then applied to refine the raw solutions returned by the quantum hardware. To the best of our knowledge, our method represents the first work to apply quantum computing techniques to solve either the single- or bi-objective variant of the Traveling Thief Problem. The main contributions of this work are summarized as follows:
\begin{itemize}
    \item We propose a novel hybrid approach that combines quantum annealing with the $\varepsilon$-constraint method to the traveling thief problem, where the total item profit is incorporated as a constraint bounded by a tunable threshold. By systematically varying this threshold, we repeatedly minimize the travel cost under different profit constraints, enabling effective exploration of the Pareto front.   
    \item The reformulated single objective problem involves a sum of fractional quadratic terms, which cannot be directly solved using the QA solver. To address this, we propose an iterative transformation technique to handle fractional quadratic terms by reformulating the objective into a difference of quadratic expressions. The resulting problem is then expressed as a QUBO model and efficiently solved using D-Wave's quantum annealer, leveraging the capabilities of quantum annealing to handle large-scale combinatorial optimization. 
    \item To enhance the solutions obtained from D-Wave's quantum annealer, we develop a heuristic refinement procedure that adjusts the item selection. This post-processing step improves solution quality and promotes better convergence toward the Pareto front.
    \item Finally, we evaluate our approach on the synthetic dataset from \cite{polyakovskiy2014comprehensive}, which features three distinct item weight distributions: bounded strongly correlated, uncorrelated, and uncorrelated similar weight. In all categories, our approach consistently outperforms state-of-the-art classical algorithms including MOEA/D \cite{zhang2007moea}, NSGA-II \cite{xu2020multi}, and U-NSGA-III \cite{seada2015unified} in computational efficiency. Specifically, our approach achieves up to an $9\times$ reduction in runtime on the bounded strongly correlated instances.
\end{itemize}

The remainder of this paper is organized as follows. The related works are discussed in Section~\ref{section: related works}  Section~\ref{description} provides a detailed description of the bi-objective Traveling Thief Problem (BI-TTP), along with an illustrative example highlighting its constraints and inherent complexity. Section~\ref{qa-based} presents the overall framework of our proposed approach, which integrates the $\varepsilon$-constraint method with quantum annealing to solve the BI-TTP. In Section~\ref{sol}, we detail the transformation process that reformulates the original problem into a quadratic expression compatible with quantum annealing. Numerical results and performance evaluations are presented in Section~\ref{numerical results}, followed by concluding remarks in Section~\ref{conclusion}.
\vspace{-0.2cm}
\section{Related Works}\label{section: related works}
\textbf{The Traveling Thief Problem.} The TTP was first introduced in \cite{bonyadi2013travelling}, offering a more sophisticated and realistic formulation than the traditional TSP. The TTP is typically formulated in two main variants: a single-objective version (TTP1) and a bi-objective version (TTP2), also referred to as BI-TTP, which serves as the primary focus of this study. The single-objective version (TTP1) combines both objectives into a single fitness function and has been the primary focus of existing researchs \cite{wu2017exact, nikfarjam2024use, el2017local}. In contrast, the bi-objective version (TTP2 or BI-TTP) treats travel time and item value as separate objectives, aiming to determine a set of Pareto-optimal solutions that reveal the trade-off between time and profit. While the original TTP2 formulation in \cite{bonyadi2013travelling} assumes that item values decay over time, this temporal degradation was removed in a later variant proposed in \cite{blank2017solving}. Despite its practical relevance, the BI-TTP has received comparatively less attention in the literature. Some studies have attempted to address this gap. For instance, the authors in \cite{yafrani2017multi} proposed a multi-objective approach to jointly optimize the travel path and item selection, demonstrating solution improvements over single-objective approaches. Similarly, \cite{wu2018evolutionary} employed dynamic programming to identify optimal packing strategies for fixed routes, while \cite{chagas2021non} introduced a specialized multi-objective variant of NSGA-II to approximate the Pareto front. Despite these efforts, existing approaches rely entirely on classical computing paradigms, which struggle with the combinatorial explosion of possibilities inherent in TTP, especially in the bi-objective case. These techniques remain computationally expensive when applied to large-scale or time-sensitive instances, thereby limiting their practical utility.  

\textbf{Quantum Annealing for Combinatorial Optimization problems.} 
Quantum annealing (QA) has increasingly been recognized as a powerful alternative to classical algorithms for solving NP-hard combinatorial optimization problems across various domains.
In finance, \cite{venturelli2019reverse} employed reverse quantum annealing (reverse QA), which integrates both forward and reverse annealing processes to bypass local minima, on the D-Wave 2000Q processor to address the portfolio optimization problem. Nevertheless, hardware constraints restrict its applicability to small-scale instances with at most 64 binary variables. 
In the transportation and logistics domain, QA has also been applied to the air traffic control problem by discretizing delay times into incremental levels and assigning a binary variable to each level to construct a QUBO formulation \cite{stollenwerk2019quantum}. For urban mobility problems, QA has been utilized with manually tuned penalty constraints to prioritize particular assignment decisions \cite{marchesin2023improving}.
Furthermore, \cite{guillaume2022deep} showed that QA significantly outperforms standard mixed-integer linear programming in deep space network scheduling, handling real-world requests much faster. Recent advancements have continued to expand QA's utility: \cite{wurtz2024solving} combined QA with Simulated Annealing to refine solutions for maximum independent set problems, while \cite{kim2025quantum} demonstrated that hybrid QA models can match the accuracy of state of the art classical solvers across more than 50 different NP-hard problems at substantially faster computation time. Additionally, by utilizing a preprocessing protocol to condense network sizes in facility location problems, \cite{ciacco2026quantum} reduced the required binary variables, thereby expanding the scale of models that can be solved.

For multi-objective scenarios, \cite{schworm2024multi} adopted a weighted-sum approach to combine multiple objectives into a single objective function and then obtained solutions through QUBO transformation. However, the weighted-sum method frequently experiences performance drops when dealing with complex intrinsic interference among objectives, primarily because selecting appropriate weights is highly challenging \cite{das1997closer}. In contrast, our approach effectively manages this conflict by imposing strict limits on the values an objective can obtain.  

The classical Traveling Salesman Problem (TSP) has also been efficiently solved using QA, as reported in \cite{le2023quantum}.  More recently, QA has been adapted to the more complex Steiner Traveling Salesman Problem (STSP) by integrating a classical method to filter out unnecessary or costly path distances \cite{ciacco2025steiner}. Nevertheless, its more challenging variant, the Traveling Thief Problem (TTP), has not yet been addressed using QA. This limitation arises because the TTP objective function contains fractional quadratic terms, which cannot be directly solved by QA. In this work, we overcome this limitation by introducing a conversion approach that transforms the TTP objective into an equivalent QUBO formulation suitable for QA.

\vspace{-0.2cm}
\section{Description of the BI-TTP}\label{description}
\begin{figure}[t]
    \centering
    
    \includegraphics[width=1\linewidth]{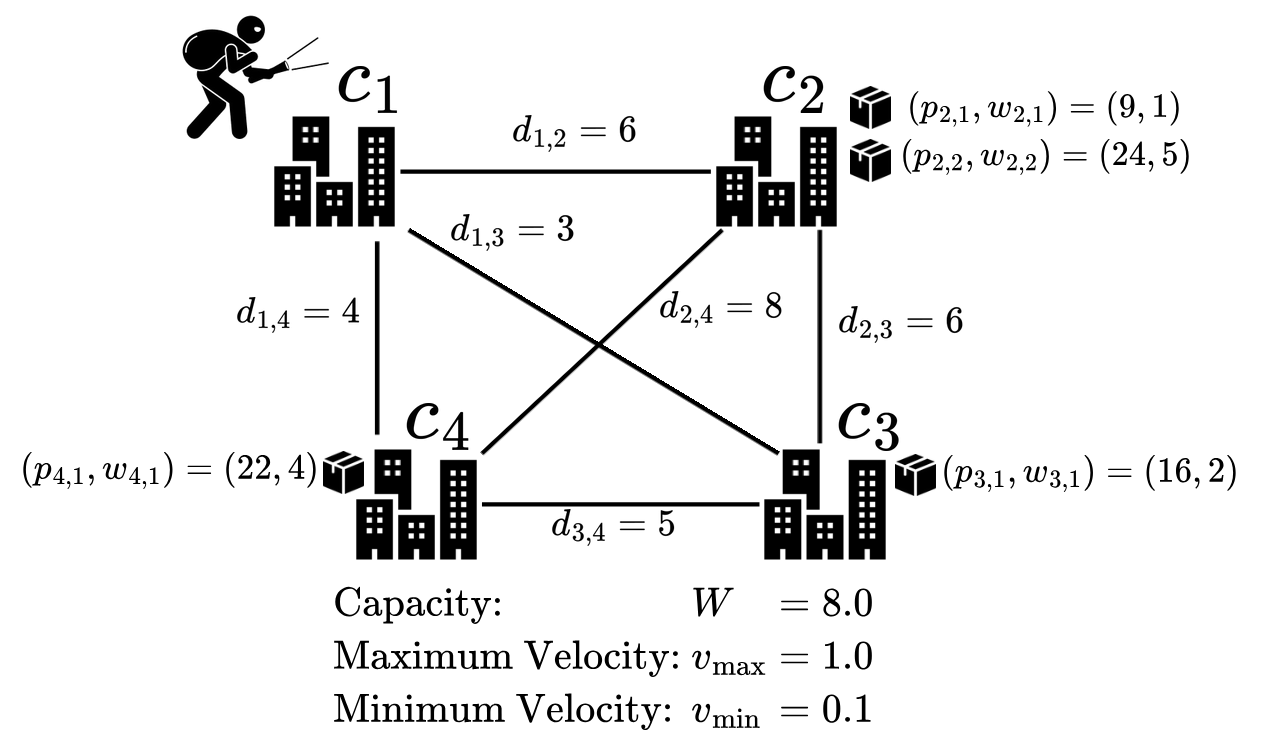}
    \caption{A four-city BI-TTP instance: To complete the tour in the minimum time, the thief can follow the route $c_1 \to c_3 \to c_2 \to c_4 \to c_1$  without collecting any items. Conversely, to maximize profit, the thief must collect items $\mathrm{item}_{2,1}$, $\mathrm{item}_{2,2}$, and $\mathrm{item}_{3,1}$, and travel along the route $c_1 \to c_4 \to c_2 \to c_3 \to c_1$ to complete the tour in the shortest possible time. }
    \label{fig:ttp_example}
    \vspace{-0.2cm}
\end{figure}
In the BI-TTP, a thief must visit a set of $N$ cities $\nhon{c_1, c_2,\ldots,c_N}$, forming a Hamiltonian cycle. The distance between any two cities $c_u$ and $c_v$ is represented by $d_{u,v}$. The knapsack component is integrated into the problem by associating each city $c_i$ with a set of items $B_{i}$. The $k$-th item $\text{item}_{i,k} \in B_{i}$ has an associated profit $p_{i,k}$ and weight $w_{i,k}$. The thief starts and ends the tour at the city $c_1$, and may collect items at any intermediate city $c_2, \ldots, c_N$, subject to a knapsack capacity constraint $W$. That is, the total weight of all selected items must not exceed $W$. Although collecting items increases total profit, it also reduces the thief’s travel speed due to the added weight. When the knapsack is empty, the thief travels at maximum speed $v_{\max}$; when fully loaded, the speed drops to a minimum value $v_{\min}$.

Let the traveling strategy be denoted by $\mathcal{A} = \{\alpha_1, \alpha_2, \ldots, \alpha_N\}$, where $\alpha_1 = 1$ and $(\alpha_2, \ldots, \alpha_N)$ is a permutation of the set $\{2, 3, \ldots, N\}$, representing the remaining $N-1$ cities. In other words, $c_{\alpha_i}$ is the $i$-th city the thief travels in the tour. To facilitate understanding, we also define $c_{\alpha_{N+1}} = c_1$, indicating that the $(N+1)$-th city in the tour is the same as the starting city $c_1$. Besides, we denote picking strategy $\mathcal{Z} = \big\{z_{i,k} \in \{0,1\} \mid k = \overline{1,\abs{B_{i}}} \big\}$, in which $z_{i,k} = 1$ if the thief collects the $k$-th item located in the city $c_i$, and $z_{i,k} = 0,$ otherwise. Along with this notation, the weight of the knapsack after visiting the city $c_{\alpha_i}$ can be calculated as
\begin{equation}\label{curweight}
    W_{i} = \sum\nolimits_{j = 1}^i \sum\nolimits_{k = 1}^{\abs{B_{\alpha_j}}}w_{\alpha_j,k}z_{\alpha_j,k}, \forall \, i = \overline{1,N}.
\end{equation}
where $w_{\alpha_j,k}$ is the weight of the $k$-th item at the city $\alpha_j$. Since items cannot be discarded once collected, the cumulative weight of selected items does not decrease throughout the entire tour. Therefore, we have $W_1 \le W_2 \le \ldots \le W_N$. Additionally, the thief must ensure that the total weight of collected items does not exceed the knapsack capacity $W$. Thus, the capacity constraint of the BI-TTP is represented by 
\begin{equation}
    W_N \le W.
\end{equation}
Consequently, the thief's velocity when traveling from city $c_{\alpha_i}$ to city $c_{\alpha_{i+1}}$ is given by 
\begin{equation}
    v_{i,i+1} = v_{\max} - W_i/W\tron{v_{\max} - v_{\min}},
\end{equation}
This velocity reflects the trade-off between collecting items and minimizing travel time. Then, the time required to travel from city $c_{\alpha_i}$ to city $c_{\alpha_{i+1}}$ can be computed as
\begin{equation}
    t_{i,i+1} = \dfrac{d_{\alpha_i,\alpha_{i+1}}}{v_{i,i+1}} = \dfrac{d_{\alpha_i,\alpha_{i+1}}}{v_{\max} - W_i/W\tron{v_{\max} - v_{\min}}}.
\end{equation}
{Hence, the total duration of the thief's journey is calculated as follows}
\begin{equation}
    f(\mathcal{A}, \mathcal{Z}) = \sum\nolimits_{i=1}^N t_{i,{i+1}}. \label{obj1}
\end{equation}
After visiting all cities, the total profit the thief obtains is calculated as equal to
\begin{equation}
    \sum\nolimits_{i=1}^N \sum\nolimits_{k=1}^{\abs{B_{\alpha_i}}} p_{\alpha_i,k} z_{\alpha_i,k} = \sum\nolimits_{i=1}^N\sum\nolimits_{k=1}^{\abs{B_i}}p_{i,k}z_{i,k},
\end{equation}
where $p_{\alpha_i,k}$ is the profit of the $k$-th item at the city $\alpha_i$. Consequently, the second objective of the BI-TTP is mathematically defined as
\begin{equation}
    g(\mathcal{A}, \mathcal{Z}) = -\sum\nolimits_{i=1}^N \sum\nolimits_{k=1}^{\abs{B_i}} p_{i,k} z_{i,k}, \label{obj2}
\end{equation}
where the negative sign is used to convert the maximization of profit into a minimization objective for consistency with the overall problem formulation. The completed version of BI-TTP is defined as follows
\begin{subequations}\label{prob:mainprob}
    \begin{align}
    \mathcal{A},\mathcal{Z} =& \begin{cases}
        \mathrm{argmin \,\,} f(\mathcal{A}, \mathcal{Z}) \label{prob:mainprob c0}\\
        \mathrm{argmin \,\, } g(\mathcal{A},\mathcal{Z})
    \end{cases} \\
    \text{subject to } & W_N \le W. 
    % \alpha_{i},\alpha_{i+1}\in \mathcal{A}. \label{prob:mainprob c1}
\end{align}
\end{subequations}
The two objectives are inherently conflicting: while collecting more items increases the total reward, it also reduces the thief's velocity, thereby leading to longer travel time. This trade-off necessitates a careful balance between maximizing received profit and minimizing travel time. Fig.~\ref{fig:ttp_example} shows an example, where the thief travels through four cities with different items.
\section{QA-based $\varepsilon$-constraint method}\label{qa-based}
This section proposes a novel approach for solving the BI-TTP by combining QA with the $\varepsilon$-constraint method, termed the QA-based $\varepsilon$-constraint method. We commence by presenting preliminaries on the quantum annealing method, reformulate the problem using binary decision variables, and apply the $\varepsilon$-constraint method to transform this problem into multiple single-objective subproblems.

% A key challenge of the $\varepsilon$-constraint method lies in selecting a feasible scalar $\varepsilon$ for the secondary objective. If $\varepsilon$ falls outside its valid range, the transformed single-objective problem may have no solution. However, estimating the minimum and maximum values of this objective is computationally expensive, as both are NP-hard using traditional methods. 

% The main issue with the $\varepsilon$-constraint method is determining how to select a feasible scalar $\varepsilon$ to constrain the second objective. Indeed, if $\varepsilon$ falls outside the minimum and maximum range of the objective, solutions of the transformed single-objective problem will not exist. However, identifying the approximate minimum and maximum values of the second objective is quite costly, as they both exhibit NP-hard characteristics according to traditional methods. Additionally, each transformed problem is also NP-hard, resulting in a significant amount of time required to fully solve many of the converted problems.
% Meanwhile, the CQM solver can literally provide solutions to these challenges in a short period of time using quantum components. Hence, with the power of quantum characteristic, these above issues can completely circumvent. 
\subsection{Preliminaries on Quantum Annealing}
Quantum annealing (QA) has been widely adopted to solve a broad range of optimization problems. In QA, a quantum processor performs the annealing process on quantum bits (qubits) to minimize an unconstrained Ising model, which is generally formulated as follows:
\begin{equation}\label{ising}
    H_{\text{Ising}}\tron{\sigma} =\sum\nolimits_{i=1}^Mh_i\sigma_i +  \sum\nolimits_{i=1}^M\sum\nolimits_{j=1}^MJ_{ij}\sigma_i\sigma_j,
\end{equation}
where $\sigma_i \in \{-1, 1\}$ represents the spin direction of the $i$-th qubit, and $h_i$ and $J_{ij}$ denote the qubit bias and coupling strength between qubits, respectively. The corresponding quantum Hamiltonian is given by:
\begin{equation}
    H_q\tron{\sigma^x} =\sum\nolimits_{i=1}^Mh_i\sigma_i^x +  \sum\nolimits_{i=1}^M\sum\nolimits_{j=1}^MJ_{ij}\sigma_i^x\sigma_j^x,
\end{equation}
where $\sigma_i^x$ denotes the Pauli-X operator acting on the $i$-th qubit in a Hilbert space of dimension $2^M$~\cite{lucas2014ising}. To obtain the lowest-energy solution, quantum annealing gradually evolves the system's Hamiltonian through a time-dependent annealing schedule, represented as: 
\begin{equation}
    H(t) = \tron{1 - \dfrac{t}{T}}H_0 + \dfrac{t}{T} H_q, \forall \, 0 \le t \le T,
\end{equation}
where $H(t)$ is the Hamiltonian of the quantum system at time $t$, and $T$ is the total duration of the annealing process, $H_0$ and $H_q$ denote the initial and final Hamiltonians (the final Hamiltonian is associated with the considered problem), respectively. According to the quantum adiabatic theorem~\cite{albash2018adiabatic}, if $T$ is sufficiently large and $H_0$ does not commute with $H_q$, the system will remain in its instantaneous ground state throughout the evolution. As a result, the final state of the qubits encodes the global minimum of the original problem. The initial Hamiltonian is typically chosen as $H_0 = - \sum_{i=1}^M \sigma_i^x$ to ensure non-commutativity with $H_q$ and provide a trivial, easily preparable ground state~\cite{rajak2023quantum}.

The QUBO model is a widely adopted framework for representing combinatorial optimization problems as binary quadratic functions. By applying the transformation $x_i = (\sigma_i + 1)/2$, the QUBO formulation corresponding to the Ising model in~\eqref{ising} can be derived as:
\begin{equation}
    H_{\text{QUBO}}\tron{\mathbf{x}} = \sum\nolimits_{i=1}^MQ_{ii}x_i + \sum\nolimits_{i=1}^M\sum\nolimits_{j=1}^MQ_{ij}x_ix_j = \mathbf{x}^T\boldsymbol{Q}\mathbf{x},
\end{equation}
where $\mathbf{x}$ is a binary vector with the $i$-th variable $x_i \in \nhon{0,1}$, and $\boldsymbol{Q} \in \mathbb{R}^{M\times M}$ is the weighted matrix. D-Wave's quantum annealer can solve problems directly when the objective is formulated as a QUBO formulation. However, practical problems naturally involve multiple constraints, which typically take the following form:
\begin{subequations}\label{QA}
    \begin{alignat}{2}
    &\underset{\mathbf{x}}{\mathrm{minimize}} \quad &&H_{\mathrm{cost}}(\mathbf{x}) = \mathbf{x}^TQ\mathbf{x}  \label{QA c0} \\
    &\text{subject to} && H_c^{(j)}(\mathbf{x}) \le 0 , \forall \, j = \overline{1,J}.\label{QA c2}
\end{alignat}
\end{subequations}
The common solution to tackle constraints is to introduce a penalty to the constraints \cite{jeong2025quantum} and solve the obtained problem, 
\begin{equation}
    \underset{\mathbf{x}}{\mathrm{minimize}}\quad H_{\mathrm{QUBO}}\tron{\mathbf{x}} = H_{\mathrm{cost}}(\mathbf{x}) - \sum\nolimits_{j=1}^J \lambda_j H_c^{(j)}(\mathbf{x}). 
\end{equation}
In D-Wave's CQM hybrid solver, penalty coefficients $\lambda_C$ are automatically calibrated, eliminating the need for manual tuning. This allows users to define linear or quadratic objectives and constraints directly, streamlining the formulation of QUBO models and enabling efficient solution retrieval.

\subsection{Binary Reformulation of the BI-TTP}
As QUBO models inherently operate over binary variables, the original problem must be reformulated to express all decision variables in binary form. Therefore, we introduce a binary decision variable $x_{v,i}$, which takes the value $1$ if $c_v$ is the $i$-th city in the tour ($\alpha_i = v$), and $0$ otherwise. Since exactly one city must occupy the $i$-th position in the tour, the following two constraints must be satisfied
\begin{align}
    \sum\nolimits_{v=1}^N x_{v,i} = 1, \forall\, i = \overline{1,N};\quad \sum\nolimits_{i=1}^N x_{v,i} = 1, \forall\, v = \overline{1,N}\label{sumcon2},
\end{align}
which dictate that for all $1 \le v,i,j \le N$ where $i \ne j$, the condition $x_{v,i} = x_{v,j} = 1$ is strictly prohibited. By ensuring that no city is visited twice, our binary formulation inherently eliminates the possibility of subtours.
Accordingly, the distance from city $\alpha_i$ to $\alpha_{i+1}$ is calculated  as follows
\begin{equation}
    d_{\alpha_i, \alpha_{i+1}} =\sum\nolimits_{u=1}^N\sum\nolimits_{v=1}^N d_{u,v}x_{u,i}x_{v,i+1}. 
\end{equation}
For the given constraints in \eqref{sumcon2}, there exists a unique tuple $(u, v, i)$ such that $x_{u,i} = x_{v,i+1} = 1$, indicating that city $c_u$ is assigned to position $i$ and city $c_v$ to position $i+1$ in the tour. 
Furthermore, using the binary variables $x_{v,i}$, the knapsack weight after visiting city $c_{\alpha_i}$, as defined in~\eqref{curweight}, can be reformulated as:
\begin{equation}
    W_i = \sum\nolimits_{j=1}^i \sum\nolimits_{t=1}^N \tron{x_{t,j}\sum\nolimits_{k=1}^{\abs{B_t}}z_{t,k}w_{t,k} },
\end{equation}
where $\di\sum\nolimits_{k=1}^{\abs{B_t}}z_{t,k}w_{t,k}$ represents the total weight of all items picked at city $c_t$. To simplify the representation, we define the column vectors $\mathbf{x}_v = [x_{v,1},\ldots,x_{v,i},\cdots, x_{v,N}]^T$ and $\mathbf{z}_i = [z_{i,1},\ldots, z_{i,k},\ldots z_{i,|B_i|}]^T$, which are then aggregated into the matrices $\mathbf{X} = [\mathbf{x}_{1},\ldots,\mathbf{x}_{v},\cdots, \mathbf{x}_{N}]$ and $\mathbf{Z} = [\mathbf{z}_{1},\ldots,\mathbf{z}_{i},\cdots, \mathbf{z}_{N}]$. The first objective function $f\tron{\mathbf{X}, \mathbf{Z}}$ is 
\begin{align}\label{obj:travel}
   f\tron{\mathbf{X}, \mathbf{Z}} &= \sum\nolimits_{i=1}^Nt_{i,i+1}
   = \sum\nolimits_{i=1}^N\dfrac{\displaystyle\sum\nolimits_{u =1}^N\sum\nolimits_{v=1}^Nd_{u,v}x_{u,i}x_{v,i+1}}{v_{i,i+1}}\notag\\
&=\sum\nolimits_{i=1}^N\dfrac{\displaystyle\sum\nolimits_{u =1}^N\sum\nolimits_{v=1}^Nd_{u,v}x_{u,i}x_{v,i+1}}{v_{\max} - W_i/W(v_{\max} - v_{\min})} \notag \\
     &= \sum\nolimits_{i=1}^N\dfrac{W\displaystyle\sum\nolimits_{u=1}^N\sum\nolimits_{v=1}^Nd_{u,v}x_{u,i}x_{v,i+1}}{Wv_{\max} - W_i(v_{\max}-v_{\min})}
\end{align}
% denote the binary vectors formed by the variables $x_{v,i}, \forall \, 1 \le v, i \le n$ and $z_{i,k}, \forall \, 1 \le i \le n, 1 \le k \le |B_i|$, respectively.
Similarly, the second objective $g\tron{\mathbf{X}, \mathbf{Z}}$ is computed as
\begin{equation}
    g\tron{\mathbf{X}, \mathbf{Z}} = -\sum\nolimits_{i=1}^N \sum\nolimits_{t=1}^N \tron{x_{t,i}\sum\nolimits_{k=1}^{\abs{B_t}}z_{t,k}p_{t,k}},
\end{equation}
where $\displaystyle\sum\nolimits_{k=1}^{|B_t|} z_{t,k} p_{t,k}$ represents the total profit accumulated by the thief after visiting city $c_t$. The BI-TTP formulation in~\eqref{prob:mainprob} can now be expressed as: 
\begin{subequations}\label{prob:reformulate}
    \begin{alignat}{2}
    &\mathrm{minimize} \quad &&f\tron{\mathbf{X}, \mathbf{Z} }=\sum\nolimits_{i=1}^N \dfrac{W\displaystyle\sum\nolimits_{u=1}^N\sum\nolimits_{v=1}^Nd_{u,v}x_{u,i}x_{v,i+1}}{Wv_{\max} - W_i(v_{\max}-v_{\min})}   \label{prob:reformulate c0}\\
    & \, &&g\tron{\mathbf{X}, \mathbf{Z}} = -\sum\nolimits_{i=1}^N \sum\nolimits_{t=1}^N \tron{x_{t,i}\sum\nolimits_{k=1}^{\abs{B_t}}z_{t,k}p_{t,k}} \label{prob:reformulate c1}\\
    &\text{subject to} && \sum\nolimits_{i=1}^N \sum\nolimits_{t=1}^N \tron{x_{t,i}\sum\nolimits_{k=1}^{\abs{B_t}}z_{t,k}w_{t,k} }\le W \label{prob:reformulate c2} \\
& \, && x_{t,i} \in \nhon{0,1}, \forall \, t,i = \overline{1,N} \label{prob:reformulate c3} \\
&\, &&\sum\nolimits_{v=1}^N x_{v,i} = 1,\forall \, i = \overline{1,N} \label{prob:reformulate c4}\\
&\, &&     \sum\nolimits_{j=1}^N x_{v,i} = 1, \forall \, v = \overline{1,N} \label{prob:reformulate c5}\\ 
& \, && z_{t,k} \in \nhon{0,1}, \forall \, t = \overline{1,N}, k = \overline{1,\abs{B_t}} \label{prob:reformulate c6}
\end{alignat}
\end{subequations}
The reformulated problem in~\eqref{prob:reformulate} is now expressed as a binary optimization problem; however, it still involves the simultaneous optimization of two conflicting objectives, making it challenging to obtain optimal solutions. To address this, the next subsection introduces the use of the $\varepsilon$-constraint method, which transforms the problem into a series of more tractable single-objective subproblems.

% Constraints $\eqref{prob:reformulate c2} - \eqref{prob:reformulate c6}$ are all requirements of the BI-TTP, in which constraint $\eqref{prob:reformulate c2}$ ensures the total weight of all collected items does not exceed the capacity of the knapsack, constraints $\eqref{prob:reformulate c3} - \eqref{prob:reformulate c5}$ ensure the tour of the thief will form a Hamiltonian cycle, and constraint $\eqref{prob:reformulate c6}$ indicates that the thief can choose to either pick an item or not. 
\vspace{-0.2cm}
\subsection{Dispensing with $\varepsilon$-constraint method}
\begin{algorithm}[t]
    \caption{QA-based $\varepsilon$-constraint method for the BI-TTP}
    \label{alg:flow}
\hspace*{\algorithmicindent}\textbf{Input:}
The binary version of the BI-TTP, i.e. problem \eqref{prob:reformulate}; the number of equal segments $S$.  \\
    \hspace* {\algorithmicindent}\textbf{Output:} The Pareto front of the BI-TTP
    \begin{algorithmic}[1]
        \State Identify the minimum and the maximum of the second objective \eqref{prob:reformulate c1} without concerning about the first objective \eqref{prob:reformulate c0} using QA, denoted as $g_{\mathrm{min}}$ and $g_{\mathrm{max}}$, respectively. 
        \For {$0 \le s < S$}
            \State $\varepsilon_s \gets g_{\min} + s\cdot (g_{\max} - g_{\min})/{S}$
        \EndFor
        \For {$0 \le s < S$}
            \State \multiline{Generate the $s$-th $\varepsilon$-constraint subproblem by constraining the objective \eqref{prob:reformulate c1} such that $\varepsilon_s \le g(\mathbf{X}, \mathbf{Z}) \le \varepsilon_{s+1}$. } 
            \State \multiline{Obtain the solution $(\mathbf{X}_s, \mathbf{Z}_s)$ of the $s$-th subproblem using QA.} 
        \EndFor
        \State Determine the non-dominated set from identified solutions $(\mathbf{X}_s, \mathbf{Z}_s), \forall \, 0 \le s < S$ as the Pareto front. 
    \end{algorithmic}
\end{algorithm}
To solve the binary formulation of the BI-TTP, we propose a novel approach called the QA-based $\varepsilon$-constraint method, which combines QA with the $\varepsilon$-constraint technique. Specifically, we apply the $\varepsilon$-constraint method to transform the bi-objective problem in~\eqref{prob:reformulate} into a sequence of single-objective subproblems, each of which is solved using quantum annealing. To enable this transformation, the first step involves determining the range of possible knapsack profit values. Let $g_{\min}$ and $g_{\max}$ denote the minimum and maximum achievable profits, respectively, computed without considering the travel time objective. The detailed methodology for identifying these values is given in Subsection~\ref{Subsection: Max-min 2nd identification}. 

Given the determined range of knapsack profit, the interval $\left[g_{\min}, g_{\max}\right]$ is divided into $S$ segments $[\varepsilon_s, \varepsilon_{s+1}]$, where $g_{\min} = \varepsilon_0 \le \varepsilon_1 \le \ldots \le \varepsilon_S = g_{\max}$. In our experiments, we initialize $\varepsilon_s = g_{\min} + s \cdot (g_{\max} - g_{\min})/S,\forall \, 0 \le s \le S$ such that each interval $\left[\varepsilon_s, \varepsilon_{s+1}\right]$ covers an equal portion of the value spectrum. For the given knapsack constraint value corresponding to the $s$-th segment, the associated subproblem is formulated as follows:
\begin{subequations}\label{prob:subprob}
    \begin{alignat}{2}
    &\mathrm{minimize} \quad&&  f\tron{\mathbf{X}, \mathbf{Z} }=\sum\nolimits_{i=1}^N \dfrac{W\displaystyle\sum\nolimits_{u=1}^N\sum\nolimits_{v=1}^Nd_{u,v}x_{u,i}x_{v,i+1}}{Wv_{\max} - W_i(v_{\max}-v_{\min})} \label{prob:subprob c0}\\
    &\text{subject to} && \varepsilon_s \le g\tron{\mathbf{X},\mathbf{Z}} \le \varepsilon_{s+1}, \label{prob:subprob c1}\\
   &\, && \eqref{prob:reformulate c2}-\eqref{prob:reformulate c6}, \nonumber
\end{alignat}
\end{subequations}
The selection of the number of segments is critical, as if the interval width is too narrow, the feasible region of the corresponding subproblem may become empty, making it unsolvable. Conversely, if the interval is too large, the constraint may become ineffective, reducing its impact on guiding the search. Additionally, each interval must lie within the range of achievable knapsack profits, i.e., $\left[g_{\min}, g_{\max}\right]$, to ensure feasibility. Consequently, in our approach, we divide this interval uniformly into $S$ equal segments, which facilitates uniform coverage of the Pareto front across the objective space. This systematic segmentation also enables effective boundary exploration, leveraging the flexibility of tuning $\varepsilon$, in contrast to the stochastic nature of MOEAs that rely on randomly initialized populations.

The goal in solving the problem in \eqref{prob:subprob} is to minimize the total travel time while satisfying a constraint on the knapsack profit. However, the resulting objective function is expressed as a sum of fractional terms, which poses challenges for direct optimization. To address this, we propose an iterative algorithm that reformulates the objective into an equivalent form suitable for quantum annealing. The detailed transformation and solution procedure are described in Subsection~\ref{subproblem: s-th segment subproblem}. After solving all subproblems, the final Pareto front is constructed by extracting the set of non-dominated solutions from the union of all obtained results. The overall framework of the proposed approach is summarized in Algorithm~\ref{alg:flow}.

\vspace{-0.2cm}
\section{Finding solution to the BI-TTP}\label{sol}
This section presents the solution procedures for: $(i)$ identifying the upper and lower bounds of the knapsack profit; $(ii)$ minimizing the total travel time under constrained knapsack profit values; and $(iii)$ enhancing the solutions returned by the quantum annealer through a tailored heuristic refinement. %Finally, we illustrate the overall complexity of the QA-based $\varepsilon$-constraint method in the context of classical simulation and its variable usage. 

% We implement our method to address the BI-TTP. Initially, we determine the range of possible profit values that the thief could achieve. This range is subsequently divided into equal segments, with the two endpoints of each segment introducing a constraint in the new subproblem. Since the subproblem’s objective is not quadratic, we propose a fractional programming approach to establish its equivalence with a reformulated problem and solve the latter iteratively. Finally, we derive the time complexity of our approach in the context of classical simulation. 

\subsection{Identification of Minimum and Maximum Knapsack Values}\label{Subsection: Max-min 2nd identification}
% We now need to calculate the minimum value of the second objective, which is equivalent to the maximum profit that the thief can receive. In other words, we will identify the minimum value of $g\tron{\mathbf{x}, \mathbf{z}}$ without any interference with the first objective $f\tron{\mathbf{x}, \mathbf{z}}$. 
The problem of identifying the lower bound of $g\tron{\mathbf{X}, \mathbf{Z}}$ is
\begin{subequations}\label{prob:knapsack}
    \begin{alignat}{2}
    &\mathrm{minimize} \quad && g(\mathbf{X},\mathbf{Z}) = -\sum\nolimits_{i=1}^N \sum\nolimits_{t=1}^N \tron{x_{t,i}\sum\nolimits_{k=1}^{\abs{B_t}}z_{t,k}p_{t,k}} \label{prob:knapsack c0}\\
    &\text{subject to} && \eqref{prob:reformulate c2} - \eqref{prob:reformulate c6}. \nonumber
\end{alignat}
\end{subequations}
It is evident that both the objective function and constraints in problem~\eqref{prob:knapsack} are formulated as linear or quadratic expressions over binary variables, making the problem directly compatible with the D-Wave CQM solver for efficient solution search. In contrast, the maximum value of $g(\mathbf{X}, \mathbf{Z})$ is $0$ when the thief does not collect any items along the tour, i.e., $z_{t,k} = 0$ for all $t \in \{1,\ldots, N\}$ and $k \in \{1,\ldots, |B_t|\}$.

% The optimal solution produced by the quantum annealer is then treated as a lower bound for the second objective  $g(\mathbf{X}, \mathbf{Z})$.

\subsection{The Travel Time Minimization with Constrained Knapsack's Values}\label{subproblem: s-th segment subproblem}
The travel time minimization problem, subject to the knapsack value constrained within the segment $[\varepsilon_s, \varepsilon_{s+1}]$, is rewritten as follows:
\begin{equation}\label{prob: s-th segment subproblem - full}
    \begin{aligned}
    &\mathrm{minimize} &&  f\tron{\mathbf{X}, \mathbf{Z} }=\sum\nolimits_{i=1}^N \dfrac{W\displaystyle\sum\nolimits_{u=1}^N\sum\nolimits_{v=1}^Nd_{u,v}x_{u,i}x_{v,i+1}}{Wv_{\max} - W_i(v_{\max}-v_{\min})} \\
    &\text{subject to} && \varepsilon_s \le  g\tron{\mathbf{X},\mathbf{Z}} \le \varepsilon_{s+1}, \\
    & \, && \eqref{prob:reformulate c2} - \eqref{prob:reformulate c6}.
    % &\, && \sum_{i=1}^N \sum_{t=1}^N \tron{x_{t,i}\sum_{k=1}^{\abs{B_t}}z_{t,k}w_{t,k} }\le W  \\
    % & \, && x_{t,i} \in \nhon{0,1}, \forall \, t,i = \overline{1,N}  \\
    % &\, &&\sum_{v=1}^N x_{v,i} = 1,\forall \, i = \overline{1,N}  \\
    % &\, && \sum_{j=1}^N x_{v,i} = 1, \forall \, v = \overline{1,N} \\ 
    % & \, && z_{t,k} \in \nhon{0,1}, \forall \, t = \overline{1,N}, k = \overline{1,\abs{B_t}} 
\end{aligned}
\end{equation}
To enable compatibility with quantum annealing, both the objective and the constraints must be reformulated into a QUBO structure. However, the objective function in \eqref{prob: s-th segment subproblem - full} contains fractional terms, which fall outside the standard quadratic form. To address this challenge, we propose an iterative transformation technique. Specifically, we construct an equivalent formulation of the subproblem \eqref{prob: s-th segment subproblem - full}, grounded in the results established by Lemma~\ref{Lemma:convert}, as presented below:
\begin{lemma}\label{Lemma:convert}
Given the subproblem in \eqref{prob: s-th segment subproblem - full}, we represent it in a more concise form as follows:
\begin{equation}\label{prob:brief_prob}
    \begin{aligned}
    &\underset{\mathbf{X}, \mathbf{Z}}{\mathrm{minimize}} \quad && \sum\nolimits_{i=1}^N\dfrac{A_i(\mathbf{X},\mathbf{Z})}{B_i(\mathbf{X},\mathbf{Z})} \\
    &\text{subject to} && (\mathbf{X},\mathbf{Z}) \in \mathcal{D} 
\end{aligned}
\end{equation}
where $A_i\left(\mathbf{X}, \mathbf{Z}\right) = W\displaystyle\sum\nolimits_{u=1}^N\sum\nolimits_{v=1}^N d_{u,v} x_{u,i} x_{v,i+1}$ and $B_i \left(\mathbf{X}, \mathbf{Z}\right) = Wv_{\max} - W_i (v_{\max} - v_{\min}), \forall \, i = \overline{1,N}$. Here, $\mathcal{D}$ denotes the feasible domain defined by all constraints in problem~\eqref{prob: s-th segment subproblem - full}. This problem is equivalent to 
\begin{equation}\label{prob:convert}
   \begin{aligned}
    &\underset{\mathbf{X}, \mathbf{Z}, \mathbf{b}}{\mathrm{minimize}} \quad && \sum\nolimits_{i=1}^N\dfrac{A_i(\mathbf{X}, \mathbf{Z})}{b_i}\\
    &\text{subject to} && B_i(\mathbf{X}, \mathbf{Z}) \ge b_i > 0, \forall \, i = \overline{1,N},\\
    & \, && (\mathbf{X}, \mathbf{Z}) \in \mathcal{D},
\end{aligned} 
\end{equation}
where $\mathbf{b} = \vuong{b_1,b_2,\ldots,b_n}^T$. 
\end{lemma}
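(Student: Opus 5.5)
We show that the two problems have the same optimal value and that optimal solutions can be mapped to each other. The argument is a standard epigraph-type reformulation of a fractional objective. It rests on two facts: every numerator $A_i(\mathbf{X},\mathbf{Z})$ is nonnegative, since $W>0$, $d_{u,v}\ge 0$ and the variables are binary; and every denominator $B_i(\mathbf{X},\mathbf{Z})$ is strictly positive on $\mathcal{D}$. The latter holds because the capacity constraint \eqref{prob:reformulate c2} together with $W_1\le\cdots\le W_N$ gives $W_i\le W$, hence $B_i\ge Wv_{\min}>0$, assuming $v_{\min}>0$ as in the model.

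\textbf{Step 1 (feasibility lift).} Take any $(\mathbf{X},\mathbf{Z})\in\mathcal{D}$ and set $b_i=B_i(\mathbf{X},\mathbf{Z})$. The triple $(\mathbf{X},\mathbf{Z},\mathbf{b})$ is then feasible for \eqref{prob:convert}, and its objective equals that of \eqref{prob:brief_prob}. It follows that the optimum of \eqref{prob:convert} is at most the optimum of \eqref{prob:brief_prob}.

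\textbf{Step 2 (reverse inequality).} Take any feasible $(\mathbf{X},\mathbf{Z},\mathbf{b})$ of \eqref{prob:convert}. Because $0<b_i\le B_i$ and $A_i\ge 0$, we get $A_i/b_i\ge A_i/B_i$ for each $i$. Summing over $i$, the objective of \eqref{prob:convert} is at least the objective of \eqref{prob:brief_prob} evaluated at $(\mathbf{X},\mathbf{Z})$, which is in turn at least the optimum of \eqref{prob:brief_prob}. Combining this with Step 1, the two optimal values coincide. Attainment of the optimum is not an issue, since $\mathcal{D}$ is finite and, for fixed $(\mathbf{X},\mathbf{Z})$, the infimum over $\mathbf{b}$ is achieved at $b_i=B_i$.

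\textbf{Step 3 (correspondence of solutions).} If $(\mathbf{X}^*,\mathbf{Z}^*)$ solves \eqref{prob:brief_prob}, then by Steps 1--2 the triple $(\mathbf{X}^*,\mathbf{Z}^*,B(\mathbf{X}^*,\mathbf{Z}^*))$ solves \eqref{prob:convert}. Conversely, if $(\mathbf{X}^*,\mathbf{Z}^*,\mathbf{b}^*)$ solves \eqref{prob:convert}, the chain of inequalities in Step 2 must hold with equality, so $(\mathbf{X}^*,\mathbf{Z}^*)$ solves \eqref{prob:brief_prob}. When $A_i>0$, equality also forces $b_i^*=B_i$. When $A_i=0$, the choice of $b_i^*$ is irrelevant and can be reset to $B_i$ without loss.

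The main point needing care is the sign argument. The monotonicity $A_i/b_i\ge A_i/B_i$ relies on both $A_i\ge 0$ and $B_i>0$, the latter coming from the capacity constraint. If either failed, the relaxation $b_i\le B_i$ would no longer be tight, so we will state these sign facts explicitly at the start of the proof.
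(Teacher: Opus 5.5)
Your proposal is correct and follows essentially the paper's argument: lift any $(\mathbf{X},\mathbf{Z})\in\mathcal{D}$ by setting $b_i=B_i(\mathbf{X},\mathbf{Z})$, and use the termwise bound $A_i/b_i\ge A_i/B_i$ (from $0<b_i\le B_i$) to match optimal values and optimal solutions. Your explicit checks tighten points the paper leaves implicit, namely $A_i\ge 0$, $B_i\ge Wv_{\min}>0$ on $\mathcal{D}$, attainment despite the continuous $\mathbf{b}$, and the case $A_i=0$ where $b_i^*$ need not equal $B_i$ (the paper asserts $b_j^*=B_j(\mathbf{X}^*,\mathbf{Z}^*)$ unconditionally).
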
 
\begin{proof}
    See Appendix~\ref{appendxi: proof convert}.  
\end{proof}
By applying Lemma~\ref{Lemma:convert}, the objective is now correctly formulated within the constrained QUBO framework. Therefore, we can now focus on finding optimal solutions of the converted problem \eqref{prob:conv}, which is represented as follows:
\begin{equation}\label{prob:conv}
    \begin{aligned}
    &\mathrm{minimize} &&  f\tron{\mathbf{X}, \mathbf{Z}, \mathbf{b}} = \sum\nolimits_{i=1}^N \dfrac{W\displaystyle\sum\nolimits_{u=1}^N\sum\nolimits_{v=1}^Nd_{u,v}x_{u,i}x_{v,i+1}}{b_i}\\
    &\text{subject to} && Wv_{\max} - W_i(v_{\max}-v_{\min}) \ge b_i, \forall \, i = \overline{1,N}, \\
    &&& b_i > 0, \forall \, i = \overline{1,N}, \\
    &&&  \varepsilon_s \le g\tron{\mathbf{X},\mathbf{Z}} \le \varepsilon_{s+1}, \\
    & \, && \eqref{prob:reformulate c2} - \eqref{prob:reformulate c6}.
\end{aligned}
\end{equation}
To comprehensively solve the problem in \eqref{prob:conv}, we adopt an iterative approach to alternately optimize the decision variables $\left(\mathbf{X}, \mathbf{Z}\right)$ and the auxiliary variable $\mathbf{b}$.
\subsubsection{Update solution $(\mathbf{X}, \mathbf{Z})$ with fixed scalar $\mathbf{b}$}
With the scalar variable $\mathbf{b}$ held constant, this step involves solving for the variables $(\mathbf{X}, \mathbf{Z})$ using the CQM solver.
Specifically, at the $t$-th iteration, the problem in~\eqref{prob:conv} becomes
\begin{equation}\label{prob:conv fix b}
    \begin{aligned}
    &\mathrm{minimize} &&  f^{(t)}\tron{\mathbf{X}, \mathbf{Z}, \mathbf{b}} = \sum_{i=1}^N \dfrac{W\displaystyle\sum\nolimits_{u=1}^N\sum\nolimits_{v=1}^Nd_{u,v}x_{u,i}x_{v,i+1}}{b^{(t-1)}_i} \\
    &\text{subject to} && Wv_{\max} - W_i(v_{\max}-v_{\min}) \ge b^{(t-1)}_i, \forall \, i = \overline{1,N}, \\
    &&& \varepsilon_s \le g\tron{\mathbf{X},\mathbf{Z}} \le \varepsilon_{s+1}, \\
    & \, && \eqref{prob:reformulate c2} - \eqref{prob:reformulate c6},
\end{aligned}
\end{equation}
where $\mathbf{b}^{(t-1)}$ is taken from the previous iteration. 
The problem conforms to the QUBO model structure and can be efficiently solved using the CQM solver, yielding a low-energy solution $\left(\mathbf{{X}}^{\text{QA},(t)}, \mathbf{{Z}}^{\text{QA},(t)}\right)$ at the $t$-th iteration. 

Note that current quantum processors operate within the noisy intermediate-scale quantum (\textsf{NISQ}) era, where limited coherence times and hardware noise can degrade solution quality \cite{ohkura2022simultaneous}. As a result, quantum annealing may yield suboptimal solutions due to noise-induced errors or by becoming trapped in local optima while exploring the feasible space. To mitigate these limitations and improve solution quality, we propose a heuristic post-processing operator, termed \textit{Later and Enough Accumulation} (LEA).

Observing that selecting items from cities later in the tour keeps the knapsack lighter during earlier moments of the tour, thereby increasing the thief's velocity and reducing overall travel time. Therefore, we propose a selection strategy biased toward accumulating items near the end of the tour. Moreover, the LEA heuristic is designed to guarantee feasibility with respect to the total profit constraint. In particular, the solution $(\mathbf{X}^{\text{LEA}}, \mathbf{Z}^{\text{LEA}})$ is refined so that $g(\mathbf{X}^{\text{LEA}}, \mathbf{Z}^{\text{LEA}})$ is closest to the corresponding threshold $\varepsilon_s$ in the $s$-th subproblem. As a result, this design naturally aligns with the $\varepsilon$-constraint method, while it is not readily compatible with other classical approaches. A detailed description of this operator can be found in Appendix~\ref{appendix: LEA description}.

\subsubsection{Update auxiliary variable $\mathbf{b}$ based on received $(\mathbf{X}, \mathbf{Z})$}
\begin{algorithm}[t]
    \caption{Iterative Approach for problem \eqref{prob:conv}}
    \label{alg:iterative}
\hspace*{\algorithmicindent}\textbf{Input:}
The distance from two arbitrary cities $u$ and $v$: $d_{u,v}$ with $u,v = \overline{1,N}$; the bag's capacity $W$; the maximum and minimum velocity of the thief: $v_{\text{max}}$ and $v_{\text{min}}$; the profit of $k$-th item at city $t$: $p_{t,k}$ with $t = \overline{1,N}$, $k = \overline{1,\abs{B_t}}$; two endpoints of segment $\varepsilon_s, \varepsilon_{s+1}$ and auxiliary variable $\mathbf{b}$; 
the maximum number of iterations $T$. \\
    \hspace*{\algorithmicindent}\textbf{Output:} The traveling strategy $\mathbf{X}$ and the picking plan $\mathbf{Z}$
    \begin{algorithmic}[1]
        \State Set the iteration index $t \gets 0$, the auxiliary vector $\mathbf{b}^{(t)} \gets \mathbf{1}_N^T$, where $\mathbf{1}_N$ is a vector having $N$ elements equal to $1$. 
        \While{$t < T$} 
            \State $t \gets t + 1$
            \State Use QA to identify solution $\tron{\mathbf{X}^{\mathrm{QA},(t)}, \mathbf{Z}^{\mathrm{QA},(t)}}$ of problem \eqref{prob:conv}. 
            \State $\tron{\mathbf{X}^{(t)}, \mathbf{Z}^{(t)}}\gets \mathrm{LEA}\tron{\mathbf{X}^{\mathrm{QA},(t)}, \mathbf{Z}^{\mathrm{QA},(t)}}$
            \State Obtain $\mathbf{b}^{(t)}$ by \eqref{b_i-update}
            \If{$f(\mathbf{X}^{(t)}, \mathbf{Z}^{(t)}, \mathbf{b}^{(t)}) < f(\mathbf{X}^{(t-1)}, \mathbf{Z}^{(t-1)}, \mathbf{b}^{(t-1)})$}
                \State \textbf{break}
            \EndIf
            
        \EndWhile
        \State \Return $\mathbf{X}_s = \mathbf{X}^{(t-1)}, \mathbf{Z}_s = \mathbf{Z}^{(t-1)}$
    \end{algorithmic}
\end{algorithm}
Given the enhanced solution $\left(\mathbf{{X}}^{(t)}, \mathbf{{Z}}^{(t)}\right)$, the auxiliary variable $\mathbf{b}^{(t)}$ is updated as follows:
\begin{equation}\label{b_i-update}
    b_i = Wv_{\max} - W_i(v_{\max}-v_{\min}), \forall \, i = \overline{1,N},
\end{equation}
where this update guarantees the largest reduction of the objective function. The iterative process of updating $\mathbf{b}$ and solving for $\left(\mathbf{X}, \mathbf{Z}\right)$ continues until convergence as in Lemma~\ref{coroll1}.
 
\begin{lemma}\label{coroll1}
Given any initial point $(\mathbf{X}^{(0)}, \mathbf{Z}^{(0)}, \mathbf{b}^{(0)})$ that satisfies the feasibility conditions of problem~\eqref{prob:conv}, the iterative procedure, solving the subproblem in~\eqref{prob:conv fix b} to update $(\mathbf{X}, \mathbf{Z})$ and then updating $\mathbf{b}$ using~\eqref{b_i-update}, progressively narrows the search space and guarantees convergence.
\end{lemma}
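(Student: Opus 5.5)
The argument is a standard monotone-descent (block coordinate descent) argument for the alternating scheme applied to problem~\eqref{prob:conv}. Let $F(\mathbf{X},\mathbf{Z},\mathbf{b}) = \sum_{i=1}^N A_i(\mathbf{X},\mathbf{Z})/b_i$ denote the objective of~\eqref{prob:conv}, with $A_i\ge 0$ as in Lemma~\ref{Lemma:convert}. The plan is to show that each of the two updates does not increase $F$ and preserves feasibility. Then $\{F^{(t)}\}$ is a nonincreasing sequence bounded below, and since the feasible set in $(\mathbf{X},\mathbf{Z})$ is finite, the iterates stabilize after finitely many steps.

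\textbf{Step 1 ($(\mathbf{X},\mathbf{Z})$-update).} Fix $\mathbf{b}^{(t-1)}$, which by induction satisfies $B_i(\mathbf{X}^{(t-1)},\mathbf{Z}^{(t-1)}) \ge b_i^{(t-1)} > 0$.
\begin{itemize}
\item The previous pair $(\mathbf{X}^{(t-1)},\mathbf{Z}^{(t-1)})$ is feasible for~\eqref{prob:conv fix b}, so the feasible set of~\eqref{prob:conv fix b} is nonempty.
\item An optimal (or at least non-worse) solution of~\eqref{prob:conv fix b} therefore satisfies $F(\mathbf{X}^{(t)},\mathbf{Z}^{(t)},\mathbf{b}^{(t-1)}) \le F(\mathbf{X}^{(t-1)},\mathbf{Z}^{(t-1)},\mathbf{b}^{(t-1)})$.
\item The constraint $B_i(\mathbf{X},\mathbf{Z})\ge b_i^{(t-1)}$ is exactly what \emph{narrows the search space}: any admissible $(\mathbf{X},\mathbf{Z})$ must keep the load profile light enough at every position $i$.
\end{itemize}

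\textbf{Step 2 ($\mathbf{b}$-update).} Fix $(\mathbf{X}^{(t)},\mathbf{Z}^{(t)})$.
\begin{itemize}
\item Each term $A_i/b_i$ is nonincreasing in $b_i>0$ and is coupled only to the constraint $b_i \le B_i$. Hence the minimizer over $\mathbf{b}$ is $b_i = B_i(\mathbf{X}^{(t)},\mathbf{Z}^{(t)})$, which is the update~\eqref{b_i-update}.
\item Positivity holds because $W_i\le W_N\le W$ and $v_{\min}>0$, which give $B_i \ge W v_{\min} > 0$.
\item Consequently $F^{(t)} := F(\mathbf{X}^{(t)},\mathbf{Z}^{(t)},\mathbf{b}^{(t)}) \le F(\mathbf{X}^{(t)},\mathbf{Z}^{(t)},\mathbf{b}^{(t-1)}) \le F^{(t-1)}$.
\item At this point $F^{(t)}$ equals the original objective $f(\mathbf{X}^{(t)},\mathbf{Z}^{(t)})$, so the original travel time is also monotone.
\end{itemize}

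Moreover, $b_i^{(t)} = B_i(\mathbf{X}^{(t)},\mathbf{Z}^{(t)})$, so the next constraint set $\{(\mathbf{X},\mathbf{Z}) : B_i(\mathbf{X},\mathbf{Z})\ge b_i^{(t)}\}$ contains the current iterate. This is the sense in which the domain is progressively restricted around improving solutions while remaining nonempty.

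\textbf{Step 3 (convergence).} $F^{(t)}\ge 0$, and $F^{(t)}$ is nonincreasing.
\begin{itemize}
\item $(\mathbf{X},\mathbf{Z})$ ranges over a finite binary set, and $\mathbf{b}^{(t)}$ is determined by $(\mathbf{X}^{(t)},\mathbf{Z}^{(t)})$, so $F^{(t)}$ takes only finitely many values.
\item A nonincreasing sequence on a finite set is eventually constant, so after finitely many iterations $F^{(t+1)}=F^{(t)}$ and the procedure terminates.
\item The limit point is a fixed point of the alternating map, and hence a partial (blockwise) optimum of~\eqref{prob:conv}. Lemma~\ref{Lemma:convert} relates this to~\eqref{prob:brief_prob}.
\end{itemize}

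\textbf{Main obstacle.} The delicate point is Step 1, because the QA/CQM solver (followed by LEA) is heuristic and need not return a global minimizer of~\eqref{prob:conv fix b}. It could in principle return a worse feasible point. I would handle this by stating the monotonicity under the assumption that the returned point is no worse than the incumbent $(\mathbf{X}^{(t-1)},\mathbf{Z}^{(t-1)})$, which is feasible and can always be retained as a fallback. Algorithm~\ref{alg:iterative} effectively enforces this through its comparison and termination test, so that descent, and hence finite termination, holds regardless of solver quality. Feasibility of LEA's output with respect to the $\varepsilon$-constraints and the capacity constraint should also be checked against its description; I would assume it.
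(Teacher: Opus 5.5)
Your convergence argument is correct, but it takes a different route from the paper's. The paper never looks at objective values. It notes that the point returned at iteration $t$ satisfies $B_i(\mathbf{X}^{(t)},\mathbf{Z}^{(t)})\ge b_i^{(t-1)}$, so the update \eqref{b_i-update} gives $b_i^{(t)}\ge b_i^{(t-1)}$ componentwise. Together with the next iteration's constraint, this means the cumulative loads $W_i$ are nonincreasing across iterations. Hence the feasible regions $\Theta_t=\{(\mathbf{X},\mathbf{Z})\in\mathcal{D}: B_i(\mathbf{X},\mathbf{Z})\ge b_i^{(t-1)}\ \forall i\}$ are nested, $\Theta_{t+1}\subseteq\Theta_t$, and finiteness of the binary domain forces them to stabilize.

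You instead run a block-coordinate-descent argument on $F$ and conclude from the finite range of $f$. Each route buys something different:
\begin{itemize}
\item The paper's argument needs only that each returned point be \emph{feasible} for its subproblem, not optimal. So the heuristic nature of the CQM solver, your ``main obstacle'', is irrelevant to it. However, it says nothing about the travel time decreasing.
\item Your argument gives the stronger conclusion that $f$ is nonincreasing and that the stopping point is a blockwise optimum. The price is that each $(\mathbf{X},\mathbf{Z})$-step must be exact, or no worse than the incumbent.
\end{itemize}
Your chain can also be written directly in terms of $f$:
\begin{align*}
F^{(t)}=f(\mathbf{X}^{(t)},\mathbf{Z}^{(t)}) &\le f(\mathbf{X}^{\mathrm{QA},(t)},\mathbf{Z}^{\mathrm{QA},(t)})\\
&\le F(\mathbf{X}^{\mathrm{QA},(t)},\mathbf{Z}^{\mathrm{QA},(t)},\mathbf{b}^{(t-1)})\le F^{(t-1)}.
\end{align*}
In this form it even survives the LEA post-processing of Algorithm~\ref{alg:iterative}. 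LEA's Phase-1 swaps are accepted only if they decrease $f$, and its Phase-2 removals can only lower every $W_i$. The nested-set argument need not survive LEA. LEA checks membership in $\mathcal{D}$ but never checks $B_i\ge b_i^{(t-1)}$, and a Phase-1 swap toward a heavier later item can raise the tail loads and break $\mathbf{b}^{(t)}\ge\mathbf{b}^{(t-1)}$.

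One part of the statement you assert but do not prove is that the procedure ``progressively narrows the search space.'' Your observation that the next constraint set contains the current iterate is a non-emptiness statement, not a nesting one. The fix is the paper's one-line step. Since $(\mathbf{X}^{(t)},\mathbf{Z}^{(t)})$ is feasible for the $t$-th subproblem, $b_i^{(t)}=B_i(\mathbf{X}^{(t)},\mathbf{Z}^{(t)})\ge b_i^{(t-1)}$. Therefore $\{B_i\ge b_i^{(t)}\ \forall i\}\subseteq\{B_i\ge b_i^{(t-1)}\ \forall i\}$. You already have every ingredient for this, so add it explicitly.
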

\begin{proof}
    See Appendix~\ref{appendix: proof coroll1}. 
\end{proof}

Based on the above formulation, the complete framework for solving problem~\eqref{prob:conv} is outlined in Algorithm~\ref{alg:iterative}. 
Initially, the auxiliary variable $\mathbf{b}$ is set to $\mathbf{1}^T$, which provides a small and positive starting point preserving the feasibility of problem \eqref{prob:conv}. During each iteration, the decision variables $\mathbf{X}$ and $\mathbf{Z}$, along with $\mathbf{b}$, are updated in an alternating fashion. Specifically, a candidate solution $(\mathbf{X}, \mathbf{Z})$ is first obtained using the CQM solver. This solution is then refined via the proposed LEA heuristic to improve quality. Subsequently, with $(\mathbf{X}, \mathbf{Z})$ fixed, the auxiliary variable $\mathbf{b}$ is updated using the rule defined in equation~\eqref{b_i-update}. The process repeats until either the objective value fails to improve or a predefined iteration limit is reached.
% \textcolor{blue}{ Turning to the number of constraints, we consider those defined in \eqref{prob:conv fix b}, which are directly solved by QA in our framework. Specifically, constraints \eqref{prob:reformulate c2}, \eqref{prob:reformulate c4}, and \eqref{prob:reformulate c5} contribute $1,N$, and $N$ constraints, respectively, while there are three additional constraints in \eqref{prob:conv fix b}. Therefore, the total number of constraints in each QA iteration is $2N + 4$.
% }
\vspace{-0.2cm}
\subsection{Variable Utilization and Computational Complexity}
Since D-Wave hardware restricts the number of variables to at most 500000, it is crucial to carefully analyze variables consumption when encoding the BI-TTP into a QUBO formulation. For the binary tour variable $x_{v,i}$ with $1 \le v,i \le N$, we have utilized $N\times N = N^2$ variables. Similarly, for the picking variable $z_{i,k}$ with $1 \le i \le N, 1 \le k \le \abs{B_i}$, the encoding requires $N \times \di\max_{1 \le i \le N}\abs{B_i}$ variables. Therefore, the total number of variables used in our encoding is $N \times (N + \max_{1 \le i \le N}\abs{B_i})$.

Regarding the time complexity, for a problem involving $K$ binary variables, the theoretical time complexity of quantum annealing (QA) is $\mathcal{O}(e^{\sqrt{K}})$, as established in \cite{mukherjee2015multivariable}. In our formulation, the traveling strategy $\mathbf{X}$ and the item selection plan $\mathbf{Z}$ require $\mathcal{O}(N^2)$ and $\mathcal{O}(NM)$ binary variables, respectively, where $N$ is the number of cities and $M$ is the total number of items. Therefore, the time complexity for computing the minimum and maximum bounds of the second objective is $\mathcal{O}(e^{\sqrt{N^2 + NM}})$. Given that the range of the second objective is divided into $S$ equal segments, $S$ subproblems must be solved. For each subproblem, each iteration involves one QA call and one application of the LEA heuristic, with complexities of $\mathcal{O}(e^{\sqrt{N^2 + NM}})$ and $\mathcal{O}(M^2)$, respectively.
After obtaining $S$ solutions from the subproblems, identifying the non-dominated set to construct the Pareto front incurs a computational cost of $\mathcal{O}\left(S \log S\right)$, as discussed in \cite{mishra2010fast}. Therefore, the overall computational complexity of the proposed algorithm is $\mathcal{O}\left(STe^{\sqrt{n^2 + nm}} + STm^2 + S \log S\right)$, where $T$ denotes the number of iterations required for convergence. 

It is important to note that the aforementioned time complexity is derived under the assumption of classical simulation of quantum annealing. In contrast, our proposed approach utilizes D-Wave's Leap Service, which employs a quantum-classical hybrid CQM solver that integrates advanced classical heuristics with quantum processing capabilities. By exploiting quantum phenomena such as superposition and entanglement, the QPU can simultaneously explore a vast portion of the solution space, thereby enabling near-optimal solutions to be found in significantly reduced time. A comprehensive performance evaluation is provided in Section~\ref{numerical results}.  
\vspace{-0.2cm}
\section{Numerical Results}\label{numerical results}
\begin{table}[t]
\centering
\caption{Information of considered instances}
\label{table:instances}
\resizebox{0.8\linewidth}{!}{
\begin{tabular}{ccccc}
\midrule
Instances & $n$ & $m$              & $W$                         & Knapsack Type \\  \midrule
\texttt{ch150\_n149\_bsc}   & 150        & \multicolumn{1}{l}{149} & 12310                    & bsc           \\
\texttt{ch150\_n149\_unc}   & 150        & \multicolumn{1}{l}{149} & 6860                    & unc           \\
\texttt{ch150\_n149\_usw}   & 150        & \multicolumn{1}{l}{149} & 13608                    & usw           \\ \midrule
\texttt{a280\_n279\_bsc}   & 280        & \multicolumn{1}{l}{279} & 25936                    & bsc           \\
\texttt{a280\_n279\_unc}   & 280        & \multicolumn{1}{l}{279} & 12718                     & unc           \\
\texttt{a280\_n279\_usw}   & 280        & \multicolumn{1}{l}{279} & 25478                    & usw           \\ \midrule
\end{tabular}
}
\end{table}
This section presents experimental results validating the effectiveness of the proposed approach using diverse benchmark instances to ensure generality across various settings.% Moreover, the numerical results demonstrate that our proposed hybrid framework consistently outperform state-of-the-art classical multi-objective optimization methods.
\subsection{Experimental Setup}
\textbf{Environment.} All experiments were conducted on Kaggle using an Intel(R) Xeon(R) CPU (4 cores, 2.20 GHz) with 30 GB RAM. The implementation environment includes Python 3.10, Dimod 0.12, D-Wave System 1.32, and Pymoo 0.6. All experiments were conducted on the D-Wave Advantage2 Quantum Processing Units (QPUs), which support up to 4,400 qubits and 40,000 couplers. The Advantage2 QPU employs the Zephyr topology, in which each qubit is connected to 16 orthogonal neighboring qubits via internal couplers, along with 2 external couplers and 2 odd couplers that link it to similarly oriented qubits. Furthermore, the QUBO formulation is implemented using D-Wave’s Constrained Quadratic Model (CQM) framework and solved through the Leap Service’s hybrid quantum–classical CQM solver. Consequently, the reported runtime corresponds to the total time CQM solver spent working on each problem, encompassing QPU access time and classical postprocessing time.

\textbf{Datasets.} We employ the instances \texttt{ch150\_n149} and \texttt{a280\_n279} from \cite{polyakovskiy2014comprehensive}, which differ in the number of cities and items. Notably, the \texttt{a280\_n279\_bcs} instance has been widely adopted in BI-TTP competitions at \textit{EMO-2019}\footnote{\url{https://www.egr.msu.edu/coinlab/blankjul/emo19-thief/}}, \textit{GECCO-2019}\footnote{\url{https://www.egr.msu.edu/coinlab/blankjul/gecco19-thief/}}, and \textit{GECCO-2024}\footnote{\url{https://sites.google.com/view/ttp-gecco2024/}}. Detailed characteristics of the test cases are provided in Table~\ref{table:instances}, where $N$, $M$, and $W$ denote the number of cities, items, and knapsack capacity, respectively. Item weights follow three standard distributions: bounded strongly correlated (\texttt{bsc}), uncorrelated (\texttt{unc}), and uncorrelated with similar weights (\texttt{usw}). For main results, the number of $\varepsilon$-segments $S$ is set to $10$. 
 
\begin{figure*}
\centering
\includegraphics[width=\linewidth]{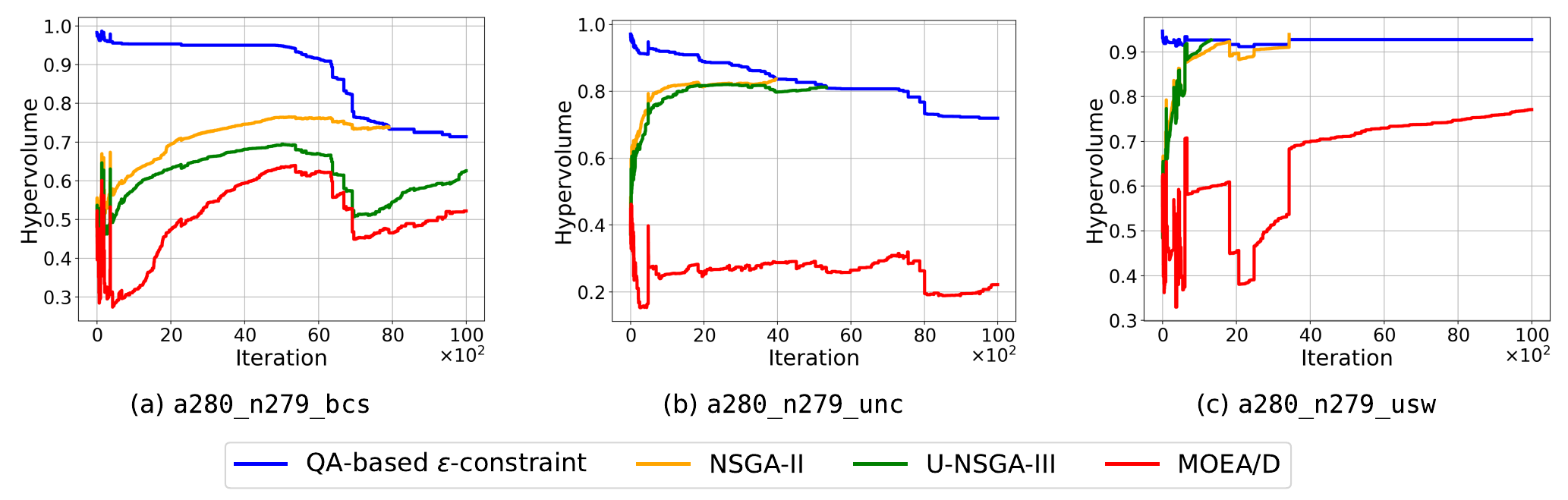}
\caption{\label{fig:conv} The HV of QA-based $\varepsilon$-constraint method, NSGA-II, U-NSGA-III, and MOEA/D at each iteration corresponding to the instance \texttt{a280\_n279}. Our method produces a complete and high-quality Pareto front via Quantum Annealing, after which the evolving solution sets of NSGA-II, U-NSGA-III, and MOEA/D are incrementally compared at each iteration. }
\end{figure*}

\textbf{Baselines.} For comparison, we evaluate the performance of our approach again state-of-the-art classical MOEAs widely used for solving bi-objective optimization problems: \textbf{NSGA-II} \cite{xu2020multi} and \textbf{U-NSGA-III} \cite{seada2015unified}. Both NSGA-II and U-NSGA-III extend the genetic algorithm (GA) framework \cite{bu2022cognitive} by employing non-dominated sorting to address multi-objective optimization, focusing on approximating the Pareto front rather than optimizing a single objective. While retaining the core GA operations, crossover and mutation, they differ in selection mechanisms: NSGA-II employs non-dominated sorting and crowding distance for forming the next population, whereas U-NSGA-III utilizes a set of reference directions to encourage uniform Pareto front distribution. We further consider the \textbf{MOEA/D} algorithm \cite{zhang2007moea}, an alternative multi-objective evolutionary algorithm that employs a distinct decomposition strategy to transform a multi-objective problem into multiple single-objective subproblems. We implement three algorithms using the Pymoo framework \cite{blank2020pymoo}, which streamlines the development of evolutionary optimization methods. Following the configuration in \cite{chagas2021non}, we set the population size to $500$, the crossover rate to $0.8$, and the mutation rate to $0.1$. We run these baselines for  10000 iterations on the \texttt{a280\_n279} instance and 4000 iterations on \texttt{ch150\_n149}.

\begin{figure*}
\centering
\includegraphics[width=\linewidth]{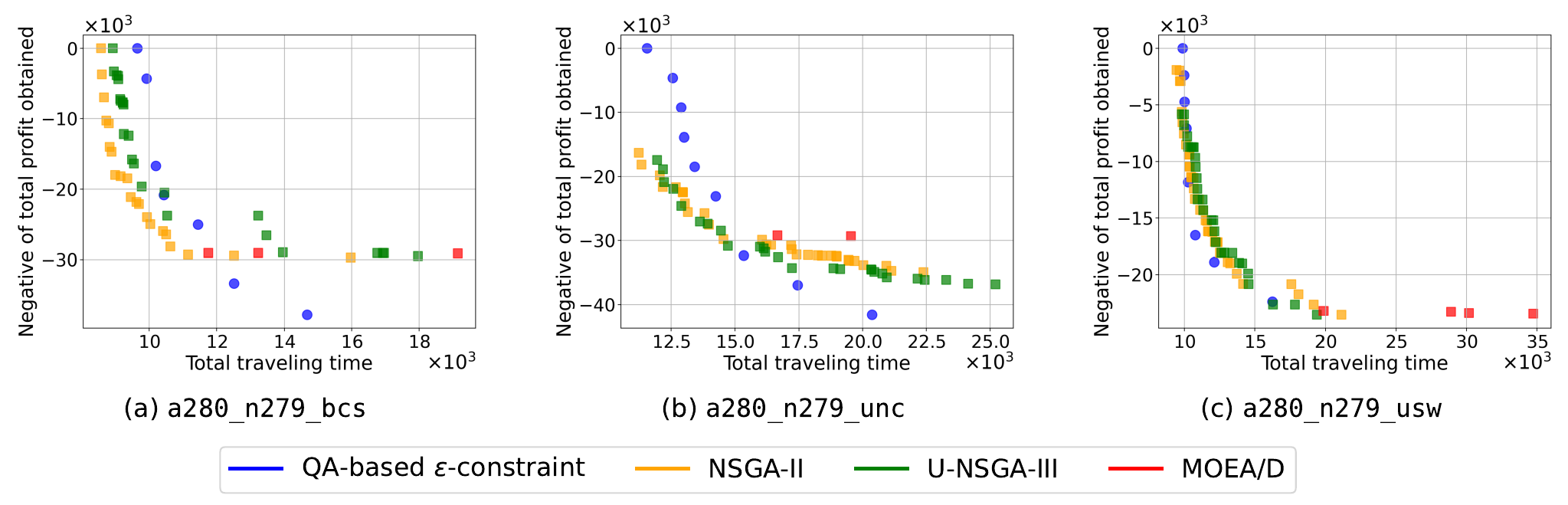}
\caption{\label{fig:res}
Comparison of Pareto fronts produced by NSGA-II, U-NSGA-III, MOEA/D, and the QA-based $\varepsilon$-constraint method for the instance \texttt{a280\_n279}. Our approach consistently yields well-distributed solutions along the second objective across all instances, while NSGA-II and U-NSGA-III, show considerable variation across instances as a result of their reliance on the initialization process. Furthermore, MOEA/D fails to adequately explore the feasible region and frequently collapses to nearly identical values on the second objective.
}
\end{figure*}

\textbf{Metrics.} We utilize Hypervolume (HV) and running time to evaluate our method for different methods. HV is a widely adopted metric for evaluating the quality of Pareto fronts in multi-objective optimization, as it reflects how close the solutions are to the true Pareto front and the diversity of the non-dominated solutions obtained \cite{riquelme2015performance}. A crucial element in HV calculation is the reference point. Following the protocols established in \textit{GECCO-2019} and \textit{EMO-2019}, the reference point is chosen based on all outcomes from compared algorithms. Particularly,
we normalize all considered points via min-max scaling and use the reference point $[1,1]^T$, which represents the worst scaled outcome for each objective. The HV is hence calculated as the overall area dominated by all points in the Pareto front with respect to the chosen reference point. In terms of computational time, that of the classical baseline algorithms are recorded end-to-end, while the runtime of the proposed hybrid approach is obtained by aggregating the execution times returned by the D-Wave framework.

\vspace{-0.2cm}
\subsection{Comparison with State-of-the-art methods}
Fig.~\ref{fig:conv} illustrates the evolution of the hypervolume (HV) for different methods. Since our quantum-based method generates a complete, high-quality Pareto front in a single run, we incrementally merge the evolving solutions from NSGA-II and U-NSGA-III with our fixed Pareto front at each iteration to enable a fair comparison. As a result, Fig.~\ref{fig:conv} exhibits that the HV curve of our method presents a slight decreasing trend, not due to solution degradation, but due to the integration of intermediate solutions from the evolving fronts of the evolutionary algorithms. Across all three subfigures, the hypervolume (HV) of our quantum-based method consistently exceeds $0.7$, indicating a well-distributed Pareto front that effectively dominates a substantial portion of the feasible region, which is widely regarded as a hallmark of high-quality solutions in multi-objective optimization \cite{riquelme2015performance}. Notably, in subfigure~(\ref{fig:conv}a), U-NSGA-III requires after 10000 iterations cannot match the HV achieved by our method in a single quantum-assisted run. Meanwhile, although U-NSGA-III gradually converges toward comparable HV levels, it remains less stable in reaching competitive front quality, as illustrated in subfigures~(\ref{fig:conv}b) and (\ref{fig:conv}c). By contrast, MOEA/D struggles to achieve comparable solution quality on all benchmark instances. This limitation arises because the MOEA/D framework decomposes a multi-objective problem into multiple subproblems using scalar objective weights, which makes it difficult to capture the complex interactions between objectives in the BI-TTP.

\begin{table*}[]
\centering
\resizebox{\textwidth}{!}{
\begin{tabular}{c|cccccc}
\toprule
& \texttt{ch150\_n149\_bcs} & \texttt{ch150\_n149\_unc} & \texttt{ch150\_n149\_usw} & \texttt{a280\_n279\_bcs} & \texttt{a280\_n279\_unc} & \texttt{a280\_n279\_usw} \\ \midrule Total QPU time
& $1.2938 \pm 0.0690$         & $1.2993 \pm 0.1280$         & $1.3336 \pm 0.0054$         & $1.4070 \pm 0.0395$        & $1.3991 \pm 0.0325$        & $1.3362 \pm 0.0010$        \\ \midrule
MOEA/D                                                                              & $\mathbf{9678 \pm 197}$      & $\mathbf{10079 \pm 657}$      & $\mathbf{9718\pm 273}$      & $\mathbf{38170 \pm 884}$    & $\mathbf{41614 \pm 265}$    & $\mathbf{40672 \pm 967}$    \\
NSGA-II                                                                             & $5431 \pm 239$ ($7.5\times$)      & $1708 \pm 498$ ($2.1\times$)       & $917 \pm 57$ ($1.3\times$)      & $33771 \pm 4018$ ($8.1\times$)   & $16907 \pm 3191$ ($4.0\times$)     & $13229 \pm 2553$ ($3.3\times$)    \\
U-NSGA-III                                                                          & $\mathbf{8712 \pm 135}$      & $1365 \pm 85$ ($1.9\times$)       & $966 \pm 155$ ($1.3\times$)     & $\mathbf{38977 \pm 732}$    & $17900 \pm 4583$ ($4.2\times$)    & $8484 \pm 2714$ ($2.1\times$)   \\ \midrule
Ours (without LEA) & $724 \pm 8$       & $719 \pm 5$       & $726 \pm 5$       & $4144 \pm 42$     & $4203 \pm 10$     & $3998 \pm 142$     \\ 
Ours (with LEA)    & $727 \pm 9$       & $723 \pm 5$       & $728 \pm 5$       & $4161 \pm 41$     & $4227 \pm 11$     & $4006 \pm 142$     \\ \bottomrule
\end{tabular}
}
\caption{Comparison of computational time between the proposed method and three baseline algorithms, namely MOEA/D, NSGA-II, and U-NSGA-III. The row labeled “Total QPU time” reports the quantum processing unit (QPU) usage time of our approach, whereas all remaining entries correspond to the overall runtime. For the baseline methods, runtime is measured either at the iteration where their solutions are compared with those produced by our approach or upon algorithm termination. The rows “Ours (without LEA)” and “Ours (with LEA)” denote the proposed method evaluated without and with the LEA heuristic, respectively. \textbf{Bolded values} indicate that the corresponding algorithm reached its termination criterion. Values shown in parentheses $(a\times)$ represent the computational overhead relative to our method. All reported results are averaged over three independent runs to ensure fair comparison. }
\label{table:qa run time}
\vspace{-0.2cm}
\end{table*}

Fig.~\ref{fig:res} illustrates the Pareto fronts generated by different methods for the \texttt{a280\_n279} instance. The privileged advantage of the proposed QA-based $\varepsilon$-constraint method is its systematic strategy for covering the full range of the second objective by dividing it into uniformly spaced intervals. This structured exploration generates well-distributed solutions along the axis of the second objective, enabling the resulting Pareto front to span a wide spectrum of trade-offs. In contrast, NSGA-II and U-NSGA-III rely heavily on the initial population and their evolutionary mechanisms, limiting the capability of consistently exploring the entire objective space. As shown in Fig.~\ref{fig:res}, their performance varies significantly across different instances. For example, all algorithms perform relatively well on the \texttt{a280\_n279\_usw} instance, likely due to the uniformity in item weights, which simplifies the trade-off between travel time and item selection. However, on the more complex \texttt{a280\_n279\_bsc} instance, where item weights are strongly correlated with profits and require more sophisticated decision-making, both NSGA-II and U-NSGA-III fail to cover the lower end of the second objective spectrum, leaving solutions with values below $-30,000$ undominated. Similarly, for the \texttt{a280\_n279\_unc} instance, the Pareto fronts obtained by NSGA-II and U-NSGA-III are concentrated around the central region of the feasible space, failing to capture extreme trade-offs near the lower and upper bounds of the second objective.

Table~\ref{table:qa run time} reports the runtime statistics (in seconds) for all compared algorithms across various benchmark instances. For our proposed QA-based $\varepsilon$-constraint method, we separately list the total quantum processing unit (QPU) time and the overall execution time, which includes data preprocessing, classical optimization steps prior to QPU invocation, and the application of the LEA heuristic for post-solution refinement. The reported results show that our approach consistently finds high-quality Pareto fronts faster than classical methods across all BI-TTP instances. Notably, the total runtime and QPU access time for each instance are almost equal, as both are primarily governed by the CQM solver, depending on problem size, input variables, and constraint complexity.  In contrast, the classical methods exhibit high runtime variability, as their performance is highly sensitive to the choice of initialization.

Interestingly, in the strongly correlated bounded item distribution setting, U-NSGA-III is likewise unable to match the solution quality achieved by our approach. More importantly, our method also achieves solutions up to $8.1$ times faster than NSGA-II on the \texttt{a280\_n279\_bcs} benchmark, which is recognized as one of the standard benchmarks for algorithm comparison in the BI-TTP competitions at \textit{EMO-2019}\footnote{https://www.egr.msu.edu/coinlab/blankjul/emo19-thief/}, \textit{GECCO-2019}\footnote{https://www.egr.msu.edu/coinlab/blankjul/gecco19-thief/}, and \textit{GECCO-2024}\footnote{\url{https://sites.google.com/view/ttp-gecco2024/}}.  Furthermore, the runtime advantage over classical baselines becomes increasingly pronounced for the \texttt{a280\_n279} instance compared to \texttt{ch150\_n149}, as the computational overhead relative to our proposed method consistently increases across all instances, thereby underscoring the superior scalability of the proposed approach on larger problem instances. Consequently, QA is able to produce high-quality solutions that state-of-the-art classical baselines may fail to obtain, highlighting the effectiveness of quantum hardware in exploiting the overall feasible search space within a limited time budget. Finally, the runtime overhead associated with LEA at all benchmarks constitutes only up to $0.5\%$ of the total execution time of the proposed approach and is therefore negligible.

% \textcolor{blue}{The hypervolume (HV) of all obtained solution sets is reported in Table~\ref{table:qa hv}. For each benchmark instance, the reference point is selected from the union of non-dominated solution sets produced by all compared algorithms. Under this reference-point selection, the HV values for the \texttt{bcs} and \texttt{unc} instances are substantially higher, whereas a slightly lower HV is observed for the \texttt{usw} instance. This behavior can be attributed to the fact that, for the \texttt{bcs} and \texttt{unc} instances, our method is able to cover a broad range of values along the second objective, while exhibiting a modest degradation in coverage for the remaining instance, as visualized in Fig.~\ref{fig:res}. Since our approach is designed to consistently spans a wide spectrum of objective values, its HV remains stable across different benchmarks. Furthermore, despite introducing only negligible computational overhead, the LEA effectively repairs solutions obtained from the D-Wave sampler, leading to a significant improvement in HV performance.
% } 

\vspace{-0.2cm}
\subsection{Ablation Study}
This section focuses on analyzing the efficacy of the proposed QA-based $\varepsilon$-constraint method. We first demonstrate the advantages of the $\varepsilon$-constraint utilization in our hybrid framework over the weighted-sum method. Next, we examine the contribution of the LEA heuristic to solution improvement.  Finally, we
analyze the quality of the obtained solutions under different values of the parameter $S$ and varying $\varepsilon$-levels. All experiments are performed on the instance \texttt{a280\_n279}.

\begin{table}[]
\center
\resizebox{\columnwidth}{!}{
\begin{tabular}{c|ccc}
\toprule
 & \texttt{a280\_n279\_bcs} & \texttt{a280\_n279\_unc} & \texttt{a280\_n279\_usw} \\ \midrule
\makecell{Weighted-sum \\(w/o LEA)}                                         & $0.64 \pm 0.02$    & $0.54 \pm 0.22$    & $0.55 \pm 0.01$    \\ \hline
\makecell{$\varepsilon$-constraint \\(w/o LEA)}      & $0.71 \pm 0.02$    & $0.73 \pm 0.03$     & $0.75\pm 0.06$    \\ \hline
 \makecell{$\varepsilon$-constraint \\(with LEA)}    & $0.89 \pm 0.03$    & $0.89\pm 0.04$     & $0.87 \pm 0.02$    \\ \bottomrule
\end{tabular}
}
\caption{
Comparison of hypervolume (HV) obtained by different approaches for addressing the bi-objective BI-TTP. 
The “Weighted-sum (w/o LEA)” row represents the scalarized formulation of the two objectives in problem~\eqref{prob:reformulate}. 
The “$\varepsilon$-constraint” rows correspond to our proposed framework, evaluated with and without the LEA heuristic.
}
\label{table:hv ablation}
\end{table}

To demonstrate the advantage of explicitly controlling the second objective via the $\varepsilon$-constraint framework, we compare our method against a weighted-sum approach. In particular, the objectives in problem~\eqref{prob:reformulate} are aggregated into a single scalar objective given by $\alpha\cdot f(\mathbf{X}, \mathbf{Z}) + (1-\alpha)\cdot g(\mathbf{X}, \mathbf{Z})$. To ensure a fair comparison with $S$ subproblems, the weight parameter is chosen as $\alpha_k = k/(S-1)$ for $k = 0, \ldots, S-1$. Moreover, as the LEA heuristic is tailored specifically to the $\varepsilon$-constraint method and is not applicable to the weighted-sum formulation, it is excluded from this experiment. As shown in Table~\ref{table:hv ablation}, the proposed $\varepsilon$-constraint framework achieves up to $35\%$ higher solution quality than the weighted-sum approach under the same number of runs. This performance gap arises because the weighted-sum formulation lacks explicit control over the second objective. Notably, the variance observed for the instance \texttt{a280\_n279\_unc} is relatively large, as the item weights are independently distributed, further highlighting the importance of explicitly controlling one objective when addressing complicated multi-objective problems.

\begin{figure}
    \centering
    \includegraphics[width=1\linewidth]{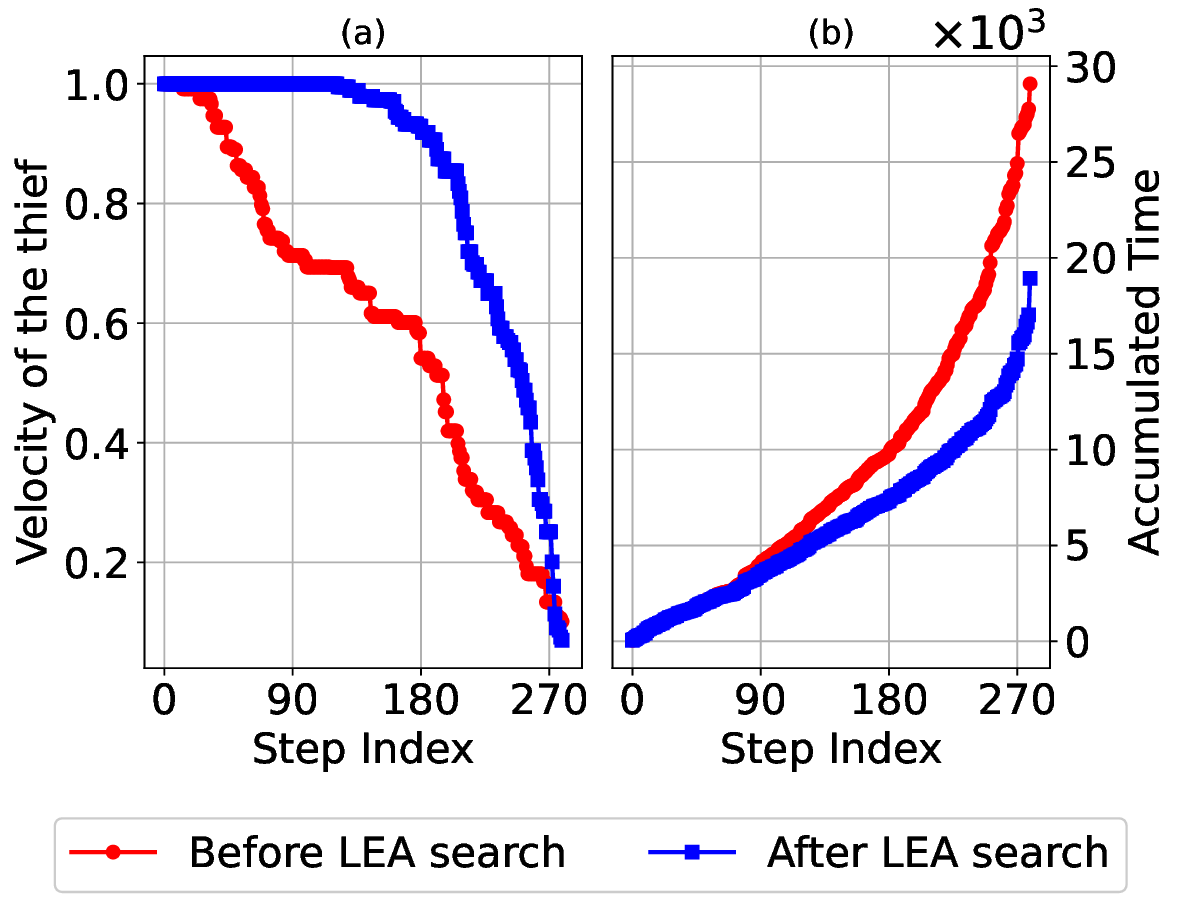}
    \caption{The change in solution before and after applying LEA search}
    \label{fig:LEAcompare}
\end{figure}
We now further examine the impact of the proposed LEA heuristic on the overall performance of our approach. As reported in Table~\ref{table:hv ablation}, despite its negligible computational overhead, LEA contributes significantly to improving solution quality at all benchmarks. To provide a clearer illustration, Fig.~\ref{fig:LEAcompare} presents a comparison of the thief’s velocity and total travel time for a representative Pareto-optimal solution from the \texttt{a280\_n279\_bcs} instance, specifically the solution with the lowest travel time. As shown in subfigure~(\ref{fig:LEAcompare}a), the thief initially moves at maximum velocity, but without LEA (red line), the velocity rapidly decreases due to early item collection. In contrast, with LEA (blue line), velocity remains higher and more stable after the first third of the tour. This results in significantly reduced travel time, as shown in subfigure (\ref{fig:LEAcompare}b), where the total duration drops from approximately 29,000 seconds to 18,000 seconds, an improvement of nearly 40\%. 

\begin{figure}
    \centering
    \includegraphics[width=\linewidth]{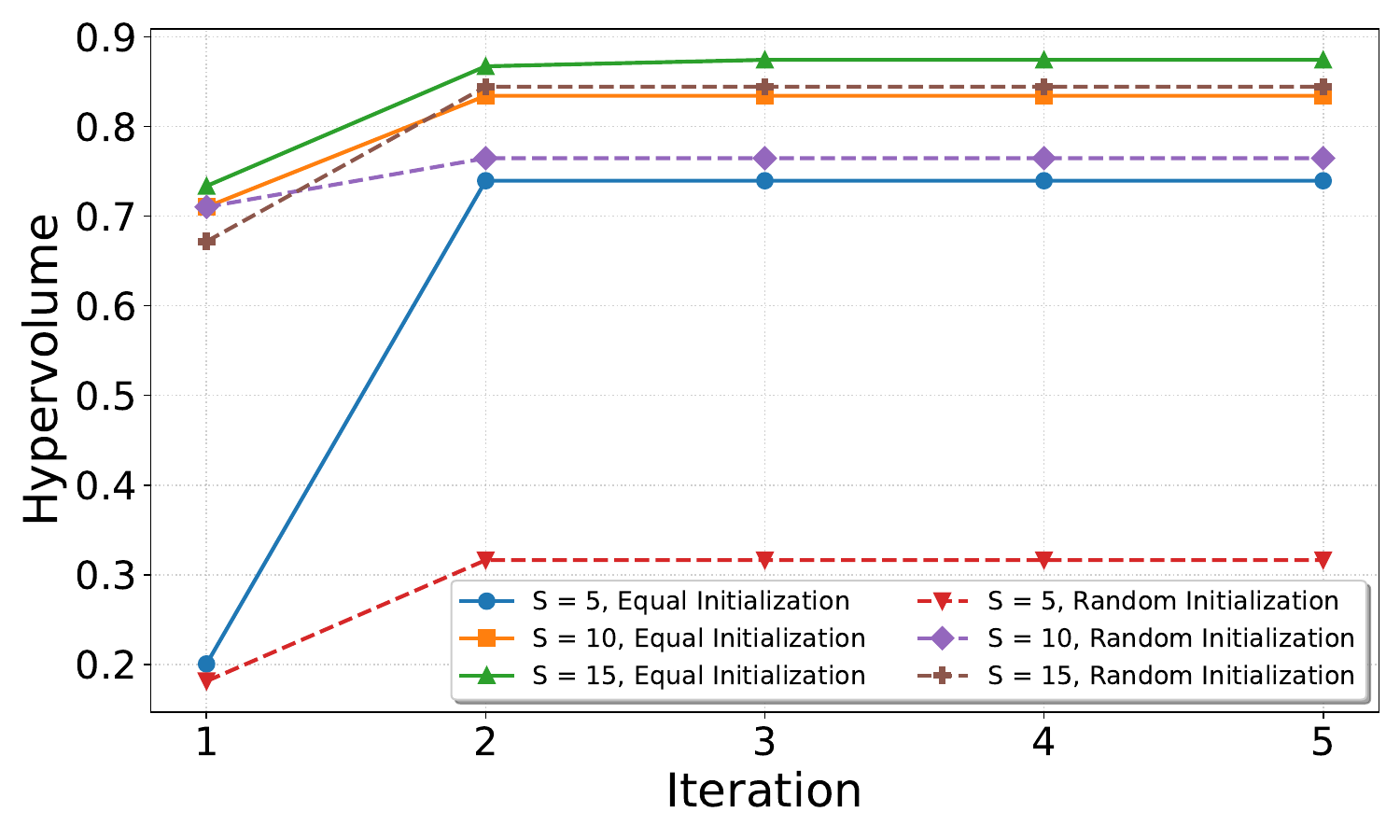}
    \caption{Sensitivity analysis of the QA-based $\varepsilon$-constraint method with respect to the number of segments $S$ and the initialization of the $\varepsilon$ levels. }
\label{table: epsilon level}
\end{figure}

We conduct a detailed ablation study to evaluate the effects of the number of segments $S$ and the initialization strategy for the $\varepsilon$ levels $\varepsilon_0, \varepsilon_1, \ldots, \varepsilon_S$. All experiments are performed on the most challenging benchmark instance, \texttt{a280\_n279\_bcs}. We vary the number of segments with $S \in \{5, 10, 15\}$. In addition, we compare different initialization strategies for the $\varepsilon$-levels. Specifically, in contrast to the equal initialization strategy, where $\varepsilon_s = g_{\min} + \frac{s}{S}(g_{\max} - g_{\min})$, we also consider a random initialization scheme defined as $\varepsilon_s = g_{\min} + t_s (g_{\max} - g_{\min})$, where $t_0 \le t_1 \le \cdots \le t_S$ are randomly sampled from a uniform distribution over $[0,1]$. As illustrated in Fig.~\ref{table: epsilon level}, the solution quality exhibits a slight improvement when $S = 15$ under the equal initialization strategy, while all approaches converge by the second iteration. This behavior stems from the effectiveness of QA in producing high-quality near-optimal solutions with a single execution. Furthermore,
when the number of segments is set to $S=10$, the resulting HV is significantly higher than that obtained with $S=5$. In contrast, increasing the number of segments to $S=15$ yields only marginal additional improvement. Therefore, $S=10$ provides a favorable trade-off between computational efficiency and solution quality. Furthermore, across all values of $S$, the equal initialization strategy consistently outperforms the randomized initialization. This advantage arises because equal initialization enables a more uniform exploration of the feasible space and maintains appropriate spacing along the second objective. In contrast, random initialization often generates solutions that are either overly clustered or widely dispersed, resulting in less effective coverage of the non-dominated region.

% Specifically, this solution is selected from the Pareto front of the benchmark instance a280\_n279\_bcs, corresponding to the point with the lowest value for the second objective. We focus on the solution derived from the final iteration of the subproblem-solving process. As illustrated in Fig.\ref{fig:LEAcompare}(a), the thief’s initial velocity gradually decreases over time. In contrast, when the LEA search is applied, the thief's velocity remains stable after roughly one-third of the tour. This improvement is particularly significant, as it leads to a reduced travel time for the thief during the later stages of the tour, which is further analyzed in Fig.\ref{fig:LEAcompare}(b). 

% \redd{doan nay co the them 1 hinh ve pareto front nua de visualize truoc va sau cai LEA search, nhung nhu vay se khien minh bi lo mat la moi vong lap chi thuc hien 1 lan}

\vspace{-0.3cm}
\section{Conclusion}\label{conclusion}
In this paper, we addressed the Bi-Objective Traveling Thief Problem (BI-TTP), which jointly optimizes two conflicting objectives: minimizing travel time and maximizing knapsack profit. To solve this challenging problem, we proposed a novel hybrid quantum-classical framework that combines the $\varepsilon$-constraint method with quantum annealing (QA). By first determining the feasible range of knapsack profits using QA, we systematically partitioned this range into uniform $\varepsilon$ intervals, each defining a subproblem with a constrained second objective. These subproblems were reformulated into QUBO models through an auxiliary-variable transformation, enabling direct solution via the D-Wave hybrid CQM solver. Furthermore, we introduced a domain-specific heuristic, called LEA search, to refine QA-generated solutions. Extensive numerical experiments on benchmark instances demonstrated that our method efficiently approximates the Pareto front and significantly reduces computation time compared to classical multi-objective evolutionary algorithms. Looking forward, with advances in quantum hardware, particularly in the number of qubits and QPU reliability, the proposed framework holds promise for solving large-scale, bi-objective combinatorial problems more efficiently.

\vspace{-0.25cm}
\section{Appendix}
\subsection{Proof of Lemma~\ref{Lemma:convert}}\label{appendxi: proof convert}
\begin{proof}
To prove the equivalence between problems \eqref{prob:brief_prob} and \eqref{prob:convert}, we show that they share the same optimal solution and optimal value. Let $\tron{\mathbf{X}^*,\mathbf{Z}^*,\mathbf{b}^*}$ be the global optimal solution for the problem in \eqref{prob:convert}. Since $0 < b_j^* \le B_j\tron{\mathbf{X}^*,\mathbf{Z}^*}$, we have 
\begin{equation}\label{lemma1_1}
    \sum\nolimits_{j=1}^N \dfrac{A_j\tron{\mathbf{X}^*,\mathbf{Z}^*}}{b_j^*} \ge \sum\nolimits_{j=1}^N \dfrac{A_j\tron{\mathbf{X}^*,\mathbf{Z}^*}}{B_j\tron{\mathbf{X}^*,\mathbf{Z}^*}}.
\end{equation}
On the other hand, let $\widetilde{b}_j = B_j\tron{\mathbf{X}^*,\mathbf{Z}^*},\forall \, j = \overline{1,N}$. Then $(\mathbf{X}^*,\mathbf{Z}^*,\widetilde{\mathbf{b}})$ is a feasible solution of problem \eqref{prob:convert}. Since $\tron{\mathbf{X}^*,\mathbf{Z}^*,\mathbf{b}^*}$ is the global optimal solution of problem \eqref{prob:convert}, it follows that 
\begin{equation}\label{lemma1_2}
    \sum\nolimits_{j=1}^N \dfrac{A_j\tron{\mathbf{X}^*,\mathbf{Z}^*}}{b^*_j} \le  \sum\nolimits_{j=1}^N \dfrac{A_j\tron{\mathbf{X}^*,\mathbf{Z}^*}}{\widetilde{b}_j} = \sum\nolimits_{j=1}^N \dfrac{A_j\tron{\mathbf{X}^*,\mathbf{Z}^*}}{B_j\tron{\mathbf{X}^*,\mathbf{Z}^*}}
\end{equation}
From \eqref{lemma1_1} and \eqref{lemma1_2}, we deduce that 
\begin{equation}
    \sum\nolimits_{j=1}^N \dfrac{A_j(\mathbf{X}^*,\mathbf{Z}^*)}{b_j^*} = \sum\nolimits_{j=1}^N \dfrac{A_j(\mathbf{X}^*,\mathbf{Z}^*)}{B_j(\mathbf{X}^*,\mathbf{Z}^*)}
\end{equation}
This result implies that $b^*_j = B_j\tron{\mathbf{X}^*,\mathbf{Z}^*},\forall \, j = \overline{1,N}$. For any $(\mathbf{X}^0,\mathbf{Z}^0) \in \mathcal{D}$, let $b^0_j = B_j(\mathbf{X}^0,\mathbf{Z}^0),\forall \, j = \overline{1,N}$. Then $(\mathbf{X}^0,\mathbf{Z}^0,\mathbf{b}^0)$ is a feasible solution of problem \eqref{prob:convert}, leading to 
\begin{align}
    &\sum_{j=1}^N \dfrac{A_j\tron{\mathbf{X}^*,\mathbf{Z}^*}}{B_j\tron{\mathbf{X}^*,\mathbf{Z}^*}} \notag\\
    =\, & \sum_{j=1}^N \dfrac{A_j(\mathbf{X}^*,\mathbf{Z}^*)}{b_j^*} \le \sum_{j=1}^N \dfrac{A_j(\mathbf{X}^0,\mathbf{Z}^0)}{b_j^0} =  \sum_{j=1}^N \dfrac{A_j(\mathbf{X}^0,\mathbf{Z}^0)}{B_j(\mathbf{X}^0,\mathbf{Z}^0)}.
\end{align}
Therefore, $(\mathbf{X}^*, \mathbf{Z}^*)$ is also the global optimal solution of problem~\eqref{prob:brief_prob}, and both problems~\eqref{prob:brief_prob} and~\eqref{prob:convert} attain the same minimum objective value. The converse follows by similar reasoning, thereby completing the proof.
\end{proof}
% Conversely, let $\tron{\mathbf{X}^*,\mathbf{Z}^*}$ be the optimal solution of problem \eqref{prob:brief_prob} and $\hat{b}_j = B_j\tron{\mathbf{X}^*,\mathbf{Z}^*}$ for all $j = 1,2,\ldots,N$. Since $B_j(\mathbf{X}, \mathbf{Z}) \ge b_j$ for all $j = 1,2,\ldots,N$, we have 
% \begin{align}
%     &\sum_{j=1}^N \dfrac{A_j\tron{\mathbf{X}^*,\mathbf{Z}^*}}{\hat{b}_j} \\
%     =\, & \sum_{j=1}^N \dfrac{A_j\tron{\mathbf{X}^*,\mathbf{Z}^*}}{B_j\tron{\mathbf{X}^*,\mathbf{Z}^*}} \le \sum_{j=1}^N \dfrac{A_j\tron{\mathbf{X},\mathbf{Z}}}{B_j\tron{\mathbf{X},\mathbf{Z}}} \le \sum_{j=1}^N \dfrac{A_j\tron{\mathbf{X},\mathbf{Z}}}{b_j},  
% \end{align}
% for all $\tron{\mathbf{X},\mathbf{Z}} \in \mathcal{D}$. Moreover, since $(\mathbf{X}^*,\mathbf{Z}^*,\hat{\mathbf{b}})$ is a feasible solution of \eqref{prob:convert}, it also becomes the global optimal solution of problem \eqref{prob:convert}. As a result, the minimum value of problem \eqref{prob:convert} is $\displaystyle\sum_{j=1}^N \dfrac{A_j\tron{\mathbf{X}^*,\mathbf{Z}^*}}{B_j\tron{\mathbf{X}^*,\mathbf{Z}^*}}$, which is the same as problem \eqref{prob:brief_prob}. 
\subsection{Proof of Lemma~\ref{coroll1}}\label{appendix: proof coroll1}
\begin{proof}
Suppose $(\mathbf{X}^{(t)}, \mathbf{Z}^{(t)}, \mathbf{b}^{(t)})$ is the solution identified at the $t$-th iteration. At the $(t+1)$-th iteration, each element of the vector $\mathbf{b}^{(t+1)}$ is updated as
    \begin{equation}\label{coroll1.1}
        b_i^{(t+1)} = Wv_{\max} - W^{(t)}_i(v_{\max} - v_{\min}), \forall \, i=\overline{1,N}, 
    \end{equation}
    The constraint at the $t$-th iteration indicates that $b^{(t+1)}_i  \ge b^{(t)}_i$, while at the $t+1$-th iteration, it becomes 
    \begin{equation}\label{coroll1.2}
        Wv_{\max} - W^{(t+1)}_i (v_{\max} - v_{\min}) \ge b^{(t+1)}_i, \forall \, i = \overline{1,N}. 
    \end{equation}
    From equations~\eqref{coroll1.1} and~\eqref{coroll1.2}, it follows that $W^{(t)}_i \ge W^{(t+1)}_i$ for all $t$. Consequently, we have
    \begin{equation}
        \di\mathbf{\Theta}_{t+1} = \mathbf{\Theta}_t \di\bigcap_{i=1}^N \nhon{W^{(t+1)}_i(\mathbf{X}^{(t+1)}, \mathbf{Z}^{(t+1)}) \le W^{(t)}_i} \subseteq \mathbf{\Theta}_t, 
    \end{equation}
    where $\mathbf{\Theta}_t$ denotes the feasible region at iteration $t$. Therefore, the update guaranties a monotonic reduction in the search space. Since the overall feasible space is finite, the iterative procedure always converges.
    \end{proof}
\subsection{Detail description of our proposed \textit{Later and Enough Accumulation} heuristic search} \label{appendix: LEA description}

The proposed LEA method operates in two phases: the first phase shifts the item selection toward the later cities in the tour, and the second phase iteratively reduces the total weight of the selected items while ensuring compliance with the knapsack capacity constraint. To facilitate this process, the picking plan $\mathbf{Z}^{\text{QA},(t)}$ is flattened into a single vector as follows:
\begin{equation}
    \mathbf{Z}^{\mathrm{LEA}} = \vuong{\mathbf{z}_{\alpha_1}^{\text{QA},(t)}; \mathbf{z}_{\alpha_2}^{\text{QA},(t)};\ldots;\mathbf{z}_{\alpha_N}^{\text{QA},(t)}}.
\end{equation}
\begin{algorithm}[t]
    \caption{Later and Enough Accumulation (LEA)}
    \label{alg:LEA}
\hspace*{\algorithmicindent}\textbf{Input:}
the traveling strategy $\mathbf{X} = \mathbf{{X}}^{\text{QA},(t)}$, the corresponding picking plan $\mathbf{Z} = \mathbf{{Z}}^{\text{QA},(t)}$ \\
\hspace*{\algorithmicindent}\textbf{Output:}  New picking plan $\mathbf{Z^{\text{LEA}}}$
\begin{algorithmic}[1] 
    \State $\mathbf{Z}^{\text{LEA}} \gets \mathbf{Z}$
    \For {$1 \le p < M$} \Comment{Phase 1}
        \For {$p+1 \le q \le M$}
            \If {$\tron{\vuong{\mathbf{Z}^{\text{LEA}}}_p, \vuong{\mathbf{Z}^{\text{LEA}}}_q} = (1,0)$}
                \State $\mathbf{Z}^{\text{swap}} \gets \mathbf{Z}^{\text{LEA}}, \textbf{Swap}\tron{\vuong{\mathbf{Z}^{\text{swap}}}_p,\vuong{\mathbf{Z}^{\text{swap}}}_q}$
                \If {$f\tron{\mathbf{X},\mathbf{Z}^{\text{swap}}} < f(\mathbf{X}, \mathbf{Z}^{\text{LEA}})\text{ \& } (\mathbf{X},\mathbf{Z}^{\text{swap}}) \in \mathcal{D} $}
                    \State $\mathbf{Z}^{\text{LEA}} \gets \mathbf{Z}^{\text{swap}} $
                \EndIf
            \EndIf
        \EndFor
    \EndFor
    \For {$1 \le p \le M$}  \Comment{Phase 2}
        \If {$\vuong{\mathbf{Z}^{\text{LEA}}}_p =1$}
            \State $\mathbf{Z}^{\text{temp}} \gets \mathbf{Z}^{LEA}, \vuong{\mathbf{Z}^{\text{temp}}}_p \gets 0$
            \If {$\tron{\mathbf{X}, \mathbf{Z}^{\text{temp}}} \in \mathcal{D}$}
                \State $\mathbf{Z}^{\text{LEA}} \gets \mathbf{Z}^{\text{temp}}$
            \EndIf
        \EndIf
    \EndFor
    \State \Return $\mathbf{{Z}}^{(t)} = \mathbf{Z}^{\text{LEA}}$ and $\mathbf{{X}}^{(t)} = \mathbf{X}^{\text{QA},(t)}$.
\end{algorithmic}
\end{algorithm}
Here, we recall that $c_{\alpha_i}$ denotes the $i$-th city visited in the tour. In the first phase, the proposed LEA algorithm iteratively loops through all elements of the picking vector $\mathbf{Z}^{\mathrm{LEA}}$. At the $p$-th iteration, if $[\mathbf{Z}^{\mathrm{LEA}}]_p = 1$, the algorithm searches for an index $q > p$ where $[\mathbf{Z}^{\mathrm{LEA}}]_q = 0$. If replacing $[\mathbf{Z}^{\mathrm{LEA}}]_p$ with $0$ and $[\mathbf{Z}^{\mathrm{LEA}}]_q$ with $1$ maintains the feasibility and leads to a reduction in total travel time, the change is accepted. This selection strategy encourages the thief to prioritize item collection near the end of the tour, while ensuring that the total number of selected items remains unchanged. 
In the second phase, the heuristic iterates through all elements of $\mathbf{Z}^{\text{LEA}}$ obtained from the first phase. For each index $p$ where $\vuong{\mathbf{Z}^{\text{LEA}}}_p = 1$, the item is deselected (i.e., $\vuong{\mathbf{Z}^{\text{LEA}}}_p = 0$) if the  solution still satisfies the knapsack constraint and maintains Pareto optimality. The detailed process is  in Algorithm \ref{alg:LEA}.
\bibliographystyle{IEEEtran}
\bibliography{IEEEabrv,references}

%\bibliography{refs}

\end{document}